\documentclass[accepted]{uai2026}%

\usepackage[american]{babel}
\usepackage{algorithm}
\usepackage{algorithmic}
\usepackage{amssymb} 
\usepackage{amsmath}
\usepackage[dvipsnames]{xcolor}
\usepackage{makecell}
\usepackage{titlesec}
\usepackage{amsthm}
\usepackage{multirow}
\usepackage{enumitem}
\usepackage{tikz}
\usetikzlibrary{
  arrows.meta,%
  positioning,%
  decorations.pathreplacing,%
  fit,%
  backgrounds%
}
\usepackage{makecell}

\usepackage[skip=5pt]{caption} %afteskip=0pt, 
\usepackage{subcaption}
\theoremstyle{remark}%
\newtheorem*{remark}{Remark}%

\titlespacing*{\paragraph}%
{0pt}{%
1pt}{%
1em}%

\titlespacing*{\section}%
{0pt}{%
2pt}{%
1pt}%

\titlespacing*{\subsection}
{0pt}{%
4pt}{%
2pt}%

\usepackage{natbib}%
\usepackage{mathtools}%
\usepackage{booktabs}%
\usepackage{tikz}%

\usepackage{xcolor}
\usetikzlibrary{arrows.meta, fit, positioning, shapes.geometric,
                backgrounds, patterns, decorations.pathreplacing}

\usepackage{amsthm}
\newtheorem{theorem}{Theorem}
\newtheorem{lemma}[theorem]{Lemma}

\title{Curvature-Weighted Capacity Allocation: A Minimum Description Length Framework for Layer-Adaptive Large Language Model Optimization}

\author[1]{\href{mailto:ikennaamaefuna@usf.edu}{Theophilus Amaefuna}\thanks{Equal contribution.}}
\author[1]{\href{mailto:hvaidya@usf.edu}{Hitesh Vaidya}\textsuperscript{*}}

\author[1]{\href{mailto:anshumanc@usf.edu}{Anshuman Chhabra}}%
\author[1]{\href{mailto:ankurarjunmali@usf.edu}{Ankur Mali}}%

\affil[1]{%
  Bellini College of Artificial Intelligence, Cybersecurity and Computing\par
  University of South Florida, Tampa, Florida, USA
}
  
\begin{document}
\maketitle

\begin{abstract}
\vspace{-0.5cm}
Layer-wise capacity in large language models is highly non-uniform:
some layers contribute disproportionately to loss reduction, whereas
others are nearly redundant. Existing layer-scoring methods provide
sensitivity estimates but do not give a principled rule for converting
those estimates into allocation or pruning decisions under a global
hardware budget. We introduce a curvature-aware, MDL-inspired framework
built around the layer gain
$\zeta_k^2=g_k^\top\widetilde H_{kk}^{-1}g_k$.
This quantity equals twice the maximal decrease predicted by the
regularized layer-restricted quadratic model and incorporates inverse
local curvature; it is therefore a local surrogate for reducible risk,
not a universal dominance claim over gradient-norm scores.
After normalizing the gains into scores $q_k$, we formulate two convex
programs: one allocates expert slots under diminishing
returns, and the other assigns layer-wise pruning ratios while protecting
high-score layers. Both continuous programs have unique globally optimal
solutions characterized by one dual variable and computable in
$O(K\log(1/\varepsilon))$ time by bisection. We also prove a quadratic
transfer-regret bound: when source and target score vectors differ by at
most $\delta$, the target surrogate cost of the transferred decision is
within $O(\delta^2)$ of the target optimum. Experiments on Mistral-7B and
Gemma-7B show clear allocation gains in some settings and competitive,
though mixed, pruning performance. The framework therefore replaces an
empirical score-to-decision heuristic with a budget-feasible optimization
procedure whose guarantees apply to the stated continuous surrogates.

\noindent Code is available on github repo - \href{https://github.com/TKAI-LAB-Mali/Curvature-Weighted-Capacity-Allocation.git}{TKAI-LAB-Mali/Curvature-Weighted-Capacity-Allocation}
\end{abstract}
\section{Introduction}\label{sec:intro}
\vspace{-1mm}
The representational capacity of a neural network is not uniformly distributed across its layers.%
Empirical studies of large language models (LLMs) consistently reveal
that layers differ substantially in their contribution to the training
objective: some layers hold the bulk of the model's expressive power,
while others are near-redundant, contributing little to no loss reduction
\citep{inequalityOfLayers, vitel2025first}.
This non-uniformity has two practical consequences that current
practice addresses only in isolation.
On one hand, \textit{capacity bottlenecks}: layers where representational
power is insufficient, limit model performance even when global
parameter counts are large.
On the other hand, \textit{capacity redundancy}: layers where parameters
contribute negligibly to the objective, inflates model complexity
without commensurate benefit.

As models scale into the hundreds of billions of parameters, both problems are
exacerbated by the underlying physical reality: hardware constraints impose
strict limits on memory, compute, and communication bandwidth.
The central challenge, therefore, is not simply to make models larger
or smaller, but to \emph{allocate capacity where it matters and remove
it where it does not}, within a global resource budget.

\vspace{-2mm}
\paragraph{The missing ingredient: curvature.}
Existing approaches to layer importance estimation rely primarily on
gradient magnitudes, activation statistics, or held-out accuracy drops
\citep{layerIF, AVSS, layerImportanceEstimation}.
These signals share a common limitation: they do not account for the
local curvature of the loss landscape.
A layer may exhibit a large gradient norm yet reside in a region of
high curvature, where the actual achievable loss reduction per unit of
capacity is small.
Conversely, a layer with a moderate gradient in a flat curvature region
may offer substantial reducible risk.
Without curvature information, capacity decisions are systematically
misallocated.

\paragraph{This work.}
We develop a unified, curvature-aware framework for simultaneously
allocating and pruning model capacity across layers under a global
resource constraint.
Our central quantity is the \emph{curvature-adjusted layer gain}:
\[
  \zeta_k^{2}
  \;=\;
  g_k^{\top}\widetilde{H}_{kk}^{-1}g_k,
\]
where $g_k$ is the layer-$k$ gradient and $\widetilde{H}_{kk}$ is a
positive-definite surrogate for the layer-restricted Hessian block.
We show that $\zeta_k^{2}/2$ is the maximal decrease
of the regularized layer-restricted quadratic model. Appendix~\ref{sec:bias}
bounds its approximation error for the nonlinear empirical objective;
for layers $k$ and $\ell$, the induced ordering is certified whenever
$\tfrac12|\zeta_k^2-\zeta_\ell^2|>\varepsilon_k+\varepsilon_\ell$, where
$\varepsilon_k$ and $\varepsilon_\ell$ bound their respective approximation errors.
With the normalized scores $q_k = \zeta_k^2 / \sum_j \zeta_j^2$, we then derive two
complementary convex programs:
\vspace{-1mm}
\begin{itemize}[noitemsep]
  \item \textbf{Capacity allocation} (Section~\ref{sec:alloc}):
    given a global hardware budget $B$, distribute additional capacity
    (e.g. mixture-of-experts slots) preferentially to
    high-$q_k$ layers, with diminishing log-returns penalizing
    over-allocation.
    The program admits a closed-form \emph{curvature-weighted
    water-filling} solution, computed in $O(K\log 1/\varepsilon)$
    via bisection.

  \item \textbf{Capacity pruning} (Section~\ref{sec:mdl_programs}):
    given a global sparsity target $S$, remove parameters
    aggressively from the low-$q_k$ layers while protecting the high-gain
    layers from degradation.
    The program is strongly convex with a unique closed-form minimizer,
    again computed by bisection.
\end{itemize}
\vspace{-2mm}
\noindent
We further analyze \textbf{transfer stability}
(Appendix~\ref{sec:transfer}): when curvature scores drift between a
source domain and a target domain by $\|q^{(A)} - q^{(B)}\|_2 \le \delta$,
the excess cost of using source-derived allocations on the target task
is bounded by $O(\delta^2)$, with explicit constants given by the
condition number of the target program.
This justifies warm-starting allocation and pruning decisions from
source-domain curvature estimates, a practically important property
for fine-tuning and domain adaptation.
\vspace{-2mm}
\paragraph{Connections to information theory.}
Our objective functions are motivated by \textit{Minimum Description Length}
(MDL) \citep{Rissanen1978Modeling, Rissanen1989StochasticCI}: model complexity is penalized by description length, while
data fit is rewarded by a concave utility reflecting diminishing returns
in code-length reduction \citep{Lotfi2023NonVacuousGB}.
This perspective connects our framework to compression-based
generalization bounds, where shorter valid descriptions can yield tighter
generalization bounds \citep{wilson2025position, SCHMIDHUBER1997857}.
The concrete objectives below are MDL-inspired surrogates rather than
literal prefix-code lengths. Unless the empirical loss is a negative
log-likelihood and the logarithm base and sample-size conversion are specified,
$\zeta_k^2/2$ is measured in units of average loss, not bits.

\paragraph{Contributions.}
We summarize our contributions as follows.
\vspace{-2mm}
\begin{enumerate}[noitemsep]
  \item \textbf{Curvature-adjusted layer gain.}
    We derive $\zeta_k^{2}$ from first principles as twice the
    maximal second-order objective decrease attributable to layer $k$,
    and characterize the approximation error introduced by Tikhonov
    regularization of the Hessian block
    (Lemma~\ref{lem:layeropt}, Appendix~\ref{sec:bias}).

  \item \textbf{Curvature-weighted water-filling.}
    We formulate and solve in closed form a convex allocation program
    that distributes capacity according to $q_k$ under diminishing
    returns and a global hardware budget
    (Theorem~\ref{thm:alloc}).

  \item \textbf{Curvature-protected pruning.}
    We formulate and solve in closed form a strongly convex pruning
    program that concentrates sparsity on low-gain layers while
    meeting a global sparsity target
    (Theorem~\ref{thm:prune}).

  \item \textbf{Transfer stability.}
    We prove an $O(\delta^2)$ transfer regret bound under score drift,
    with explicit constants tied to the condition number and
    score-gradient Lipschitz constant of the target program
    (Theorem~\ref{thm:transfer}).

  \item \textbf{Efficient algorithms.}
    We provide $O(K\log 1/\varepsilon)$ bisection algorithms for both
    programs, with explicit bisection brackets and compatibility with
    standard Hessian approximations
    (Algorithms~\ref{alg:alloc} and~\ref{alg:prune}).
\end{enumerate}

\section{Background}
\label{sec:background}

\paragraph{Minimum Description Length.}
The Minimum Description Length (MDL) principle \citep{Rissanen1978Modeling}
formalizes the tradeoff between model complexity and data fit:
the best model is the one that minimizes the total codelength required
to describe both the model and the data under the model,
\begin{equation}
  \label{eq:mdl}
  \mathrm{Cl}(D)
  \;=\;
  \underbrace{\mathrm{Cl}(\theta)}_{\text{model complexity}}
  \;+\;
  \underbrace{\mathrm{Cl}(D \mid \theta)}_{\text{data fit}},
\end{equation}
where $\mathrm{Cl}(\theta)$ is the codelength (in bits) required to
describe the parameters $\theta$, and $\mathrm{Cl}(D\mid\theta)$ is
the codelength required to describe the data given the model.
Under this principle, a model with lower $\mathrm{Cl}(D)$ simultaneously
achieves better compression and generalization, as well as lower unnecessary
complexity \citep{Lotfi2023NonVacuousGB, Prada2025BridgingPC}.

\paragraph{MDL for capacity allocation.}
Adding capacity at layer $k$ (e.g. adding
mixture-of-experts slots by $e_k \ge 0$) increases model codelength by
$\mathrm{Cl}(\theta_{e_k})$ while reducing data-fit codelength by
$\Delta_{e_k}\mathrm{Cl}(D\mid\theta)$ \citep{pac-mdl}.
Since the base terms $\mathrm{Cl}(\theta)$ and $\mathrm{Cl}(D\mid\theta)$
are constant with respect to the allocation decision, minimizing the total
codelength Eq.~\eqref{eq:mdl} over $e = (e_1,\ldots,e_K)$ reduces to,
\begin{equation}
  \label{eq:mdl_alloc}
  \min_{e_k \ge 0}
  \sum_{k=1}^{K}
  \Bigl[
    \mathrm{Cl}(\theta_{e_k})
    -
    \Delta_{e_k}\mathrm{Cl}(D \mid \theta)
  \Bigr].
\end{equation}
The first term penalizes complexity growth; the second rewards data-fit
improvement. We instantiate Eq.~\eqref{eq:mdl_alloc} concretely in
Section~\ref{sec:alloc} by modeling $\mathrm{Cl}(\theta_{e_k}) \propto \alpha c_k e_k$
(linear in resource usage) and
$\Delta_{e_k}\mathrm{Cl}(D\mid\theta) \propto \gamma q_k^\beta \log(1+e_k)$
(concave, reflecting diminishing returns in code-length reduction).These proportional models specify a tractable MDL-inspired objective; they are not derived as exact codelength identities. Consequently, the optimization results below establish optimality for the stated surrogate, not universal MDL optimality over all possible codes.

\paragraph{MDL for pruning.}
Pruning layer $k$ by sparsity ratio $\rho_k \in [0,1]$ reduces the model
codelength by $-\Delta_{\rho_{k}}\mathrm{Cl}(\theta)$ but increases data-fit
codelength by $\Delta_{\rho_{k}}\mathrm{Cl}(D\mid\theta)$ \citep{lotteryTicketHypothesis}.
Minimizing total codelength over $\rho_k = (\rho_1,\ldots,\rho_K)$
subject to a global sparsity target $S$%
\begin{equation}
  \label{eq:mdl_prune}
      \min_{0 \le \rho_k \le 1}
  \sum_{k=1}^{K}
  \Bigl[
    \Delta_{\rho_k}\mathrm{Cl}(\theta)
    +
    \Delta_{\rho_k}\mathrm{Cl}(D \mid \theta)
  \Bigr].
\end{equation}

We instantiate Eq.~\eqref{eq:mdl_prune} in Section~\ref{sec:mdl_programs}
by modeling $\Delta_{\rho_k}\mathrm{Cl}(\theta) = -b\,n_k\rho_k$
(bits saved by removing $n_k\rho_k$ parameters) and
$\Delta_{\rho_k}\mathrm{Cl}(D\mid\theta) = \eta\,q_k^\kappa\rho_k^2$
(convex degradation penalty, weighted by layer quality $q_k$).
The quadratic degradation term is likewise a modeling surrogate. Its empirical adequacy is separate from the convexity and closed-form analysis of the resulting program.

\paragraph{Layer quality via second-order information.} Both programs Eq.~\eqref{eq:mdl_alloc} and Eq.~\eqref{eq:mdl_prune} depend on
layer quality scores $q_k$ that measure how much each layer contributes
to reducible empirical risk.
A natural candidate is the \emph{Newton decrement} restricted to layer
$k$: given gradient $g_k = E_k\nabla L(\theta)$ and positive-definite
Hessian surrogate $\widetilde{H}_{kk} = E_k\nabla^2 L(\theta)E_k^\top + \tau I$,
the quantity $\zeta_k^{2}$
is twice the maximal second-order decrease in $L$ achievable by
updating layer $k$ alone (derived in Section~\ref{sec:secondorder}).
We use normalized scores:
\(
  q_k
  \;=\;
  \zeta_k^{2}\bigg/\sum_{j=1}^{K}\zeta_j^{2},
\)
throughout, ensuring scale invariance with respect to the global
curvature magnitude.
Layers with large $q_k$ carry more reducible risk and should receive
more capacity; layers with small $q_k$ are candidates for pruning.

\paragraph{Layer Influence scores and their limitations.}
The closest prior measure to $\zeta_k^2$ is the \emph{Layer Influence
score} (LayerIF) of \citet{layerIF}, which localizes the classical
influence function \citep{koh2017understanding} to individual layers.
Given training data $D^{\mathrm{train}} = \{x_i\}_{i=1}^{m}$ and
validation data $D^{\mathrm{valid}} = \{x_j^\nu\}_{j=1}^{n}$,
the global influence score of training sample $x_i$ is
\begin{equation}
  \label{eq:IF}
  I(x_i)
  \;=\;
  -\sum_{j=1}^{n}
  \nabla_{\theta}\ell(x_j^\nu, \theta)^{\top}
  H(\theta)^{-1}
  \nabla_{\theta}\ell(x_i, \theta),
\end{equation}
measuring the sensitivity of validation loss to upweighting $x_i$.
The LayerIF score restricts Eq.~\eqref{eq:IF} to layer $k$ by replacing
full-model quantities with their layer-$k$ counterparts:
\begin{multline}
I^{(k)}(x_i)
\;=\; \\
-\sum_{j=1}^{n}
\nabla_{\theta^{(k)}}\ell(x_j^\nu, \theta)^{\top}
\bigl(H^{(k)}(\theta)\bigr)^{-1}
\cdot\nabla_{\theta^{(k)}}\ell(x_i, \theta).
\label{eq:layerIF}
\end{multline}
A large $|I^{(k)}(x_i)|$ indicates that layer $k$ is sensitive to
the training sample $x_i$, and \citet{layerIF} used the aggregated
magnitude as a proxy for layer quality to guide expert allocation and
pruning.
LayerIF combines training and validation gradients through an
inverse-Hessian operator. Its magnitude therefore reflects gradient alignment
and inverse curvature and cannot, by itself, be identified with high curvature.
The gain $\zeta_k^2/2$ instead has an exact interpretation as the decrease of
the regularized quadratic surrogate. When $L$ is an average negative
log-likelihood measured in nats, the corresponding local total-codelength scale
is $n\zeta_k^2/(2\log 2)$ bits, up to damping and Taylor-remainder errors.

\looseness-1 While LayerIF captures data-dependent sensitivity, it has two
structural limitations that motivate our approach.
\textit{\textbf{First}}, $I^{(k)}$ depends on individual training samples and must be
aggregated heuristically into a layer-level score;
in contrast, $\zeta_k^2$ is defined directly on the empirical
objective and has a closed-form interpretation as reducible risk.
\textit{\textbf{Second}}, and more importantly, LayerIF provides a \emph{signal} but
not an \emph{objective}: translating $\{I^{(k)}\}$ into concrete
expert counts or pruning ratios requires a separate heuristic
(a knapsack assignment with first-come-first-served residual
allocation \citep{layerIF}), with no budget constraint and no
optimality guarantee.
Our MDL-inspired programs Eq.~\eqref{eq:mdl_alloc} and Eq.~\eqref{eq:mdl_prune}
replace this two-stage heuristic with a single convex program
that jointly determines the allocation of all layers under a global
resource constraint, with closed-form solutions and provable
optimality.

We study the problem of allocating limited model capacity across layers
to maximize locally reducible empirical risk under a global resource constraint.
The method consists of two components:
(i)~a curvature-aware measure of layer-wise reducible risk derived from a
second-order expansion of the training objective, and
(ii)~a convex allocation program that distributes capacity according to these
gains under diminishing returns. See Appendix~\ref{sec:related_work} for related works.

\subsection{Objective and Notation}

Let:
\(
  L(\theta)
  = \frac{1}{n}\sum_{i=1}^{n} \varphi\!\left(f(x_i;\theta),\, y_i\right),
\)
denote the empirical objective, where $\varphi$ is a per-sample loss
and $f(\cdot\,;\theta)$ is the model.
We write the gradient and Hessian at $\theta \in \mathbb{R}^{p}$ as:
\(
  g \;:=\; \nabla_{\theta} L(\theta),
  \quad
  H \;:=\; \nabla_{\theta}^{2} L(\theta).
\)
Partition $\theta$ into $K$ disjoint layer blocks
$\theta = (\theta_1, \ldots, \theta_K)$, where $\theta_k \in \mathbb{R}^{p_k}$
and $\sum_{k} p_k = p$.
For each layer $k$, let $E_k \in \{0,1\}^{p_k \times p}$ be the
coordinate-selection matrix and define:
\(  g_k \;:=\; E_k g \in \mathbb{R}^{p_k},
   \quad
  H_{kk} \;:=\; E_k H E_k^{\top} \in \mathbb{R}^{p_k \times p_k}.
\)

\vspace{2mm}
\subsection{Second-Order Expansion and Layer-Restricted Decrease}
\label{sec:secondorder}

\paragraph{Second-order Taylor expansion.}
Suppose $\nabla^{2} L$ is locally Lipschitz with constant $M > 0$
in a neighborhood of $\theta$. 
Taylor's theorem with integral remainder gives, \\
\begin{multline}
  \label{eq:taylor}
  L(\theta + \Delta) - L(\theta)
  =
  g^{\top}\Delta
  + \frac{1}{2}\,\Delta^{\top} H \Delta
  + R(\Delta),
  \\
  |R(\Delta)| \;\le\; \frac{M}{6}\,\|\Delta\|^{3}.
\end{multline}

\paragraph{Layer-restricted quadratic model.}
Restrict perturbations to layer $k$ by setting
$\Delta = E_k^{\top} d_k$ for $d_k \in \mathbb{R}^{p_k}$.
Since $E_k E_k^{\top} = I_{p_k}$ and the blocks are disjoint,
substitution into Eq.~\eqref{eq:taylor} yields the layer-restricted quadratic:
\[
  Q_k(d_k)
  \;=\;
  g_k^{\top} d_k
  + \frac{1}{2}\,d_k^{\top} H_{kk}\, d_k.
\]
Because neural network Hessians are generally indefinite,
minimizing $Q_k$ directly is ill-posed.
We introduce the Tikhonov-regularized surrogate:
\begin{equation}
  \label{eq:surrogate}
  \widetilde{H}_{kk} \;:=\; H_{kk} + \tau I,
  \qquad \tau > 0,
\end{equation}
and analyze
$\widetilde{Q}_k(d_k) = g_k^{\top} d_k + \tfrac{1}{2}\,d_k^{\top}\widetilde{H}_{kk}\,d_k$.
This regularization is standard in second-order deep learning
\citep{martens2015optimizing},\citet{botev2017practical}.
Quadratic damping is the Lagrangian form associated with a
trust-region subproblem for a suitable multiplier; an arbitrary fixed $\tau$
\begin{lemma}[Layer-restricted optimum]
\label{lem:layeropt}
  If $\widetilde{H}_{kk} \succ 0$, the unique minimizer of $\widetilde{Q}_k(d_k)$
  over $d_k \in \mathbb{R}^{p_k}$ is:
  \[
    d_k^{\star} \;=\; -\widetilde{H}_{kk}^{-1}\,g_k,
  \]
  and the corresponding decrease equals,
  \[    \min_{d_k}\,\widetilde{Q}_k(d_k)
    \;=\;
    -\frac{1}{2}\,g_k^{\top}\widetilde{H}_{kk}^{-1}\,g_k.\]
\end{lemma}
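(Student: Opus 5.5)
The plan is to treat $\widetilde{Q}_k$ as a strictly convex quadratic on $\mathbb{R}^{p_k}$ and obtain both claims from a single completing-the-square identity. Since $\widetilde{H}_{kk}\succ 0$ is symmetric, it is invertible and its inverse is symmetric positive definite. I will set $d_k^{\star} := -\widetilde{H}_{kk}^{-1}g_k$ and verify that, for every $d_k\in\mathbb{R}^{p_k}$,
\[
  \widetilde{Q}_k(d_k)
  \;=\;
  \tfrac12\,(d_k-d_k^{\star})^{\top}\widetilde{H}_{kk}\,(d_k-d_k^{\star})
  \;-\;\tfrac12\,g_k^{\top}\widetilde{H}_{kk}^{-1}g_k .
\]
This follows by expanding the right-hand side. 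The cross term $-d_k^{\top}\widetilde{H}_{kk}d_k^{\star}$ equals $g_k^{\top}d_k$, using the symmetry of $\widetilde{H}_{kk}$ and the definition of $d_k^{\star}$. The constant $\tfrac12 d_k^{\star\top}\widetilde{H}_{kk}d_k^{\star}$ equals $\tfrac12 g_k^{\top}\widetilde{H}_{kk}^{-1}g_k$, and it cancels against the subtracted term, leaving exactly $g_k^{\top}d_k+\tfrac12 d_k^{\top}\widetilde{H}_{kk}d_k$.

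From the identity, both conclusions are immediate. Positive definiteness makes the first term nonnegative, and it vanishes if and only if $d_k=d_k^{\star}$. Hence $d_k^{\star}$ is a minimizer and is the unique one. The minimum value is $-\tfrac12 g_k^{\top}\widetilde{H}_{kk}^{-1}g_k$, so the decrease relative to $\widetilde{Q}_k(0)=0$ is $\tfrac12\zeta_k^2$, as claimed.

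As a cross-check, one could argue through first-order conditions instead. The gradient is $\nabla\widetilde{Q}_k(d_k)=g_k+\widetilde{H}_{kk}d_k$, and the Hessian $\widetilde{H}_{kk}\succ 0$ gives strict convexity. A stationary point is then the unique global minimizer, and substituting $d_k^{\star}$ gives the same value. I would present the completing-the-square route because it yields existence, uniqueness and the optimal value in a single step, without appealing to general convexity facts.

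There is no real obstacle here. The only point requiring care is the implicit assumption that $\widetilde{H}_{kk}$ is symmetric: $H_{kk}$ is a principal block of a Hessian, so it is symmetric when $L$ is $C^2$, and adding $\tau I$ preserves symmetry. This symmetry is what allows the two cross terms to be combined in the expansion.
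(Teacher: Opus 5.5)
Your argument is correct, and it differs from the paper's in how existence and uniqueness are obtained.

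The paper computes $\nabla_{d_k}\widetilde Q_k=g_k+\widetilde H_{kk}d_k$ and sets it to zero to get the unique stationary point $d_k^\star=-\widetilde H_{kk}^{-1}g_k$. It then substitutes back to get the optimal value. The step from ``unique stationary point'' to ``unique global minimizer'' is left implicit; it rests on the convexity supplied by $\widetilde H_{kk}\succ0$, which is exactly your cross-check.

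Your completing-the-square identity gives existence, uniqueness, and the optimal value in one step, without any convex-analysis facts. It also gives more: for any step $d_k$,
\[
\widetilde Q_k(d_k)-\widetilde Q_k(d_k^\star)=\tfrac12\,(d_k-d_k^\star)^\top\widetilde H_{kk}\,(d_k-d_k^\star).
\]
This exact suboptimality gap would quantify the lost decrease when $\widetilde H_{kk}^{-1}g_k$ is only approximated (diagonal Fisher, K-FAC), as the paper permits. The paper's route is shorter and mirrors the standard Newton-step derivation.

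Your symmetry remark is well placed: the paper's gradient formula $g_k+\widetilde H_{kk}d_k$ also silently requires $\widetilde H_{kk}$ to be symmetric.
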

The proof for lemma \ref{lem:layeropt} is shown in Appendix \ref{proof:lemma1}.

\paragraph{Curvature-adjusted layer gain.}
Define:%
\begin{equation}
  \label{eq:gain}
  \zeta_k^{2} \;:=\; g_k^{\top}\widetilde{H}_{kk}^{-1}\,g_k \;\ge\; 0.
\end{equation}
By Lemma~\ref{lem:layeropt}, $\zeta_k^2/2$ is exactly the
maximal decrease predicted by the regularized quadratic model $\widetilde Q_k$.
For $d_k^\star=-\widetilde H_{kk}^{-1}g_k$ and
$\Delta_k^\star=E_k^\top d_k^\star$, Taylor's theorem gives,

\begin{multline*}
L(\theta+\Delta^\star)-L(\theta)
= -\frac12\zeta_k^2-\frac{\tau}{2}\|d_k^\star\|_2^2+R(\Delta^\star),
\qquad\\
|R(\Delta^\star)|\le \frac{M}{6}\|d_k^\star\|_2^3.
\end{multline*}
Thus $\zeta_k^2$ is a local surrogate for reducible empirical risk rather
than a global characterization of the nonlinear loss landscape. Moreover,
\[
\frac{\|g_k\|_2^2}{\lambda_{\max}(\widetilde H_{kk})}
\le \zeta_k^2
\le
\frac{\|g_k\|_2^2}{\lambda_{\min}(\widetilde H_{kk})},
\]
so the score incorporates curvature but need not preserve the ranking induced
by $\|g_k\|_2^2$. Appendix~\ref{sec:bias} gives the associated approximation
and ranking conditions.

\subsection{Capacity Allocation Under Diminishing Returns}
\label{sec:alloc}
\paragraph{Setup.}
Let $e_k \ge 0$ denote continuous capacity allocated to layer $k$
(e.g., number of active units or effective precision bits),
with per-unit resource cost $c_k > 0$.
We impose the global budget constraint
\begin{equation}
  \label{eq:Budget}
  \sum_{k=1}^{K} c_k\,e_k \;\le\; B.
\end{equation}
Define the normalized curvature weights:
\begin{equation}
  \label{eq:qk}
  q_k
  \;:=\;
  \frac{\zeta_k^{2}}{\sum_{j=1}^{K}\zeta_j^{2}},
  \qquad q_k \ge 0,\quad \sum_k q_k = 1
\end{equation}
This definition assumes $\sum_j\zeta_j^2>0$. When zero or
near-zero scores are possible, we use the smoothed simplex weights
\[
q_k^{(\varepsilon)}
=
\frac{\zeta_k^2+\varepsilon_q}{\sum_{j=1}^K\zeta_j^2+K\varepsilon_q},
\qquad \varepsilon_q>0,
\]
which satisfy $q_k^{(\varepsilon)}>0$ and provide the positive lower bound
needed for uniform strong-convexity and score-Lipschitz constants.

\looseness-1\textbf{Optimization program.}
We seek an allocation that concentrates capacity where reducible risk is
largest, while penalizing over-allocation via diminishing log-returns.
Setting $\alpha_k = \alpha c_k$ (so that linear cost scales with resource
usage and no layer-dependent free parameter is introduced),
we solve: 
\begin{multline}
  \label{eq:alloc_program}
  \min_{e_k \ge 0}
  \;\sum_{k=1}^{K}
  \Bigl[
    \alpha c_k\,e_k
    \;-\;
    \gamma\, q_k^{\beta}\phi(e_k)
  \Bigr]
  \\
  \text{s.t.}
  \quad
  \sum_{k=1}^{K} c_k\,e_k \;\le\; B,
\end{multline}

where $\alpha, \gamma > 0$ are scalar hyperparameters,
$\beta \ge 0$ controls gain emphasis (see below) and $\phi(e_k) = \log(1 + e_k)$.
The objective is convex:
$\alpha c_k e_k$ is linear, and $-\gamma q_k^{\beta}\log(1+e_k)$ is convex
on $e_k \ge 0$ since $-\log(1+e)$ is convex.
The feasible set is a convex polytope.
Slater's condition \citet{boyd2004convex} holds (any strictly feasible $e_k > 0$ with
$\sum_k c_k e_k < B$ is a Slater point), so strong duality applies.

\paragraph{Closed-form solution.}
Forming the Lagrangian:
\[
  \mathcal{L}(e,\lambda)
  =
  \sum_k
  \Bigl[\alpha c_k e_k - \gamma q_k^{\beta}\log(1+e_k)\Bigr]
  + \lambda\Bigl(\sum_k c_k e_k - B\Bigr),
\]
and imposing stationarity: $\partial\mathcal{L}/\partial e_k = 0$ for $e_k > 0$ gives,
\[
  \alpha c_k + \lambda c_k
  - \frac{\gamma q_k^{\beta}}{1 + e_k}
  = 0.
\]
Solving and incorporating the non-negativity constraint yields the
\emph{curvature-weighted water-filling} allocation,
\begin{equation}
  \label{eq:alloc_closed}
  \boxed{
    e_k(\lambda)
    \;=\;
    \max\!\left\{
      \frac{\gamma\, q_k^{\beta}}{(\alpha + \lambda)\,c_k}
      - 1,\;
      0
    \right\}.
  }
\end{equation}

If the unconstrained optimum ($\lambda = 0$) violates the budget,
the constraint is active and the dual variable $\lambda^{\star} > 0$
satisfies $\sum_k c_k\,e_k(\lambda^{\star}) = B$.
The map $\lambda\mapsto\sum_k c_k e_k(\lambda)$ is continuous
and nonincreasing and is strictly decreasing on every interval containing at
least one active coordinate. Strict convexity gives a unique primal minimizer;
when the active budget equation is nondegenerate, its dual multiplier is unique
and can be computed in $O(K\log(1/\varepsilon))$ time by bisection. A
zero-allocation plateau can admit several dual values representing the same
primal solution.

\paragraph{Role of $\beta$.}
The exponent $\beta$ interpolates between three regimes:
(i)~$\beta = 0$: $q_k^{\beta} = 1$ uniformly, so curvature information is
    discarded and the allocation depends only on costs $c_k$;
(ii)~$\beta = 1$: capacity is allocated proportionally to normalized gains
    $q_k$, recovering a natural baseline;
(iii)~As $\beta$ increases, relative weight is increasingly
placed on the largest scores. With fixed $\gamma$ and normalized $q_k<1$,
however, all $q_k^\beta$ vanish as $\beta\to\infty$, so the allocation may
collapse to zero. Literal concentration on the maximizers requires rescaling
$\gamma$ with $\beta$ or normalizing the powered weights as
$q_k^\beta/\sum_j q_j^\beta$.
In practice, $\beta = 1$ or $\beta = 2$ work well across architectures
(Section~\ref{sec:experiment}).

\begin{figure*}[t]
\centering
\includegraphics[width=0.7\linewidth]{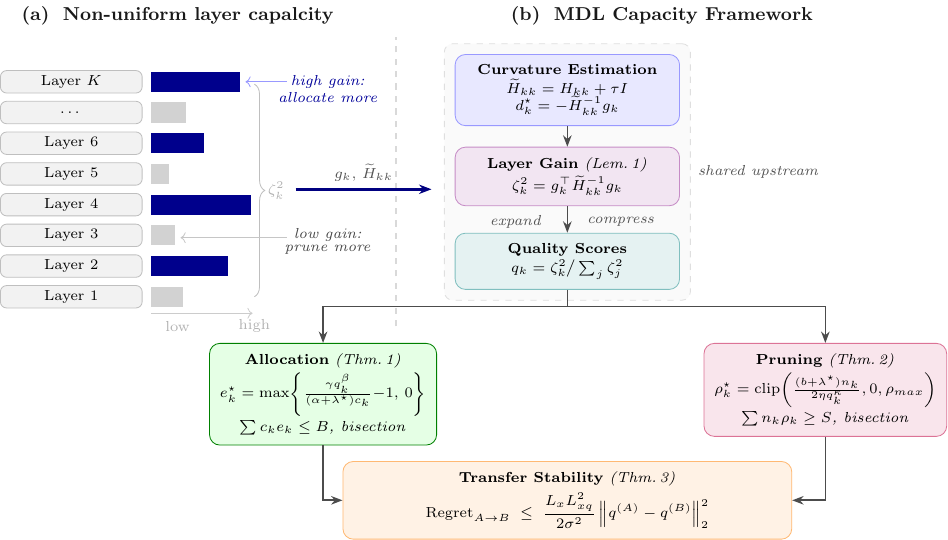}
\captionsetup{font=footnotesize}
\caption{%
  \textbf{(a)}~Layer-wise curvature scores $\zeta_k^2$ vary substantially
  across the transformer stack.
  High-gain layers (dark bars) hold disproportionate reducible risk
  and should receive additional capacity;
  low-gain layers (light bars) are candidates for aggressive pruning.
  \textbf{(b)}~Our framework computes $\zeta_k^2$ from per-layer
  gradients $g_k$ and regularized Hessian blocks $\widetilde{H}_{kk}$,
  normalizes them to quality scores $q_k$,
  and solves two convex programs: a capacity allocation program
  (Theorem~\ref{thm:alloc}) that enriches high-gain layers,
  and a pruning program (Theorem~\ref{thm:prune}) that
  concentrates sparsity on low-gain layers.
  Both programs admit closed-form solutions via $O(K\log 1/\varepsilon)$
  bisection (Algorithms~\ref{alg:alloc}--\ref{alg:prune}).
  Theorem~\ref{thm:transfer} bounds the cost of
  transferring source-domain allocations to a target domain.}
\label{fig:method}
\vspace{-1em}
\end{figure*}

\section{Method}
\label{sec:methods}
In this section, we present our MDL framework for expert allocation and pruning as displayed in Figure \ref{fig:method}.
\subsection{Capacity Allocation and Pruning Under Curvature-Weighted Tradeoffs}
\label{sec:mdl_programs}
The curvature-adjusted layer gains $\zeta_k^{2}$ defined in Eq.~\eqref{eq:gain} and derived in Section~\ref{sec:secondorder}, quantify how much empirical risk can be locally reduced
by updating layer $k$ alone.
We now use these gains to drive two complementary capacity decisions:
allocating additional capacity to under-resourced layers,
and pruning parameters from layers that contribute little to risk reduction (see Figure \ref{fig:method}).
In both programs we work with the normalized quality scores
introduced in Eq.~\eqref{eq:qk}.
Normalization ensures that the allocation parameters $(\gamma,\beta)$
are invariant to the global scale of the curvature signal,
and that $q_k$ can be interpreted directly as the relative share of
reducible risk attributable to layer $k$.
A layer with large $q_k$ has high potential to reduce empirical risk;
accordingly, it should receive more capacity and be protected from pruning.

\textbf{Capacity allocation.}
Given the setup in Section \ref{sec:alloc} and Eq.~\eqref{eq:Budget} let $e_k$ denote the effective capacity added at layer $k$
(e.g., LoRA rank or number of mixture-of-experts slots) \citep{capacityOfNN},
setting $\alpha_k = \alpha c_k$ so that linear penalty scales with
hardware usage (and no layer-dependent free parameter is introduced),
we model the tradeoff between model complexity and risk reduction
via the separable convex program in Eq.~\eqref{eq:alloc_program},
where $\gamma > 0$ scales benefit strength,
$\beta \ge 0$ controls curvature emphasis,
and $\phi$ is increasing and strictly concave with $\phi(0)=0$.
The concavity of $\phi$ models diminishing returns:
each additional unit of capacity yields progressively smaller
reductions in empirical risk.

\begin{theorem}[Convexity and closed form for allocation]
\label{thm:alloc}
The objective in Eq.~\eqref{eq:alloc_program} is convex.
If $\phi$ is strictly concave and $q_k > 0$ for all $k$,
the objective is strictly convex and the minimizer is unique.

For $\phi(e) = \log(1+e)$, there exists a unique $\lambda^{\star} \ge 0$
such that the optimal allocation is as shown in Eq. \eqref{eq:alloc_sol}.
\begin{equation}
  \label{eq:alloc_sol}
  \boxed{
  e_k^{\star}
  \;=\;
  \max\!\left\{
    \frac{\gamma\,q_k^{\beta}}{(\alpha + \lambda^{\star})\,c_k}
    - 1,\;
    0
  \right\}.
  }
\end{equation}
When $\lambda^{\star} > 0$ the budget constraint is active,
$\sum_k c_k e_k^{\star} = B$;
when $\lambda^{\star} = 0$ the unconstrained optimum is feasible.
On the active set $\{k : e_k^{\star} > 0\}$,
$e_k^{\star}$ is strictly increasing in $q_k$. A detailed proof for this theorem is discussed in Appendix~\ref{proof:theorem2}
\end{theorem}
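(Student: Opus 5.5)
The plan is to use standard convex duality for a separable objective with a single linear constraint. \textbf{Convexity} follows term by term. Each summand $\alpha c_k e_k - \gamma q_k^\beta \phi(e_k)$ is the sum of a linear function and $-\gamma q_k^\beta\phi$, and the latter is convex because $\phi$ is concave and $\gamma q_k^\beta\ge 0$. A sum of convex functions of separate coordinates is convex on the convex feasible set $\{e\ge 0,\ \sum_k c_k e_k\le B\}$. If $\phi$ is strictly concave and $q_k>0$, then $\gamma q_k^\beta>0$, so each summand is strictly convex in $e_k$. A separable sum of strictly convex one-dimensional functions is strictly convex on $\mathbb{R}^K_{\ge 0}$.

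\textbf{Existence and uniqueness} come next. The feasible set is compact, since $c_k>0$ bounds each $e_k\le B/c_k$, and the objective is continuous, so a minimizer exists. Strict convexity makes it unique. Slater's condition holds as noted after Eq.~\eqref{eq:alloc_program}, so the KKT conditions are necessary and sufficient.

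\textbf{Closed form.} For $\phi(e)=\log(1+e)$, I would write the KKT system with multiplier $\lambda\ge 0$ for the budget and $\mu_k\ge 0$ for $e_k\ge 0$:
\[
(\alpha+\lambda)c_k-\frac{\gamma q_k^\beta}{1+e_k}-\mu_k=0,\qquad \mu_k e_k=0,\qquad \lambda\Bigl(\sum_k c_ke_k-B\Bigr)=0 .
\]
Solving the stationarity equation by cases gives the formula in Eq.~\eqref{eq:alloc_sol}. If $e_k>0$, then $\mu_k=0$ and $e_k=\gamma q_k^\beta/((\alpha+\lambda)c_k)-1$, which must be positive. If $e_k=0$, then $\mu_k=(\alpha+\lambda)c_k-\gamma q_k^\beta\ge 0$, which is exactly the condition that the bracket in Eq.~\eqref{eq:alloc_sol} is $\le 0$. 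Conversely, any $\lambda$ for which $e(\lambda)$ satisfies complementary slackness yields a KKT point, and hence the unique optimum.

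\textbf{Determining $\lambda^\star$} uses the function $S(\lambda)=\sum_k c_k e_k(\lambda)$, which is continuous and nonincreasing on $[0,\infty)$ and tends to $0$ as $\lambda\to\infty$. There are two cases.
\begin{itemize}
\item If $S(0)\le B$, take $\lambda^\star=0$; the unconstrained optimum is then feasible.
\item If $S(0)>B$, the intermediate value theorem gives some $\lambda^\star>0$ with $S(\lambda^\star)=B$, so the budget is active.
\end{itemize}
Uniqueness is the \textbf{main obstacle}. As the paper itself remarks, $S$ is flat wherever no coordinate is active, so uniqueness must be argued carefully.
\begin{itemize}
\item In the active case $B>0$, any $\lambda$ with $S(\lambda)=B>0$ has at least one active coordinate. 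On a neighborhood of such a $\lambda$, $S$ is strictly decreasing, because every active term $\gamma q_k^\beta/(\alpha+\lambda)-c_k$ is strictly decreasing in $\lambda$. Hence the level set $\{S=B\}$ is a single point. Moreover, any valid multiplier for the constrained problem must satisfy $S(\lambda)=B$ if $\lambda>0$, and $\lambda=0$ is excluded because $S(0)>B$.
\item In the inactive case, $\lambda^\star=0$ is forced whenever $S(0)<B$. In the boundary case $S(0)=B$, I would check that any $\lambda>0$ gives $S(\lambda)<B$ when some coordinate is active at $0$. If no coordinate is active at $0$, then $B=0$, which is the degenerate plateau; I would exclude it by assuming $B>0$, or by defining $\lambda^\star$ as the smallest valid multiplier.
\end{itemize}

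\textbf{Monotonicity.} On the active set, $e_k^\star=\gamma q_k^\beta/((\alpha+\lambda^\star)c_k)-1$ with $\lambda^\star$ fixed and $\gamma/((\alpha+\lambda^\star)c_k)>0$. For $\beta>0$, the map $q_k\mapsto q_k^\beta$ is strictly increasing on $q_k>0$, so $e_k^\star$ is strictly increasing in $q_k$, for fixed cost $c_k$ and holding the other scores and $\lambda^\star$ fixed. I would flag that for $\beta=0$ the dependence on $q_k$ is constant, so the claim requires $\beta>0$. A final remark would note that the bisection bracket $[0,\ \gamma\max_k q_k^\beta/\min_k c_k]$ contains $\lambda^\star$.
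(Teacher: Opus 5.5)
Your proposal is correct and follows essentially the same route as the paper's proof. Both use term-wise convexity, Slater plus KKT stationarity to obtain the water-filling formula, and a monotone scalar equation $\sum_k c_k e_k(\lambda)=B$ for the multiplier. Your version is more careful than the paper's, which asserts that $\sum_k c_k e_k(\lambda)$ is strictly decreasing everywhere. Your handling of the zero-allocation plateau (requiring $B>0$ for uniqueness of $\lambda^\star$) and your note that strict monotonicity in $q_k$ needs $\beta>0$ correctly repair those two points.
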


\paragraph{Interpretation.}
At optimality, the marginal benefit equals the marginal cost at every
active layer:
\[
  \underbrace{\gamma q_k^{\beta}\,\phi'(e_k^{\star})}_{\text{marginal benefit}}
  \;=\;
  \underbrace{(\alpha + \lambda^{\star})\,c_k}_{\text{marginal cost}}.
\]
\looseness-1Layers with larger curvature signal $q_k$ receive more capacity,
while the global multiplier $\lambda^{\star}$ enforces the budget.
The exponent $\beta$ interpolates between three regimes:
$\beta = 0$ produces a uniform allocation that ignores curvature;
$\beta = 1$ allocates proportionally to normalized gains;
and as $\beta \to \infty$, allocation concentrates on the layer(s) with
the largest $\zeta_k^{2}$.
In practice, $\beta \in \{1, 2\}$ works well across architectures
(Section~\ref{sec:experiment}).

\textbf{Layer-wise pruning.}
We now consider the complementary problem of removing parameters from
layers that carry little curvature signal.
Let $\rho_k \in [0,1]$ denote the fraction of parameters pruned at
layer $k$, and let $n_k$ denote the total number of parameters in layer $k$.
The number of retained parameters is $n_k(1-\rho_k)$,
contributing $b\,n_k(1-\rho_k)$ bits to model size,
where $b > 0$ is bits per parameter.

Pruning a fraction $\rho_k$ of layer $k$ degrades data fit.
Since $\zeta_k^{2}$ measures the maximal second-order decrease
achievable by updating layer $k$ (Section~\ref{sec:secondorder}),
layers with larger $q_k \propto \zeta_k^{2}$ are more sensitive to
parameter removal.
We model the resulting degradation as $\eta q_k^{\kappa}\psi(\rho_k)$,
where $\psi$ is convex with $\psi(0)=0$,
$\kappa \ge 0$ controls curvature emphasis, and $\eta > 0$ scales the penalty.
This is a design choice grounded in the curvature interpretation of $q_k$;
layers with larger curvature signal attract a higher degradation penalty,
so the optimizer naturally protects them. 

We enforce a minimum global sparsity target $S \le \sum_k n_k \rho_k$
and solve:
\begin{multline}
  \label{eq:prune_program}
  \min_{\rho_1, \ldots, \rho_k}
  \;\sum_{k=1}^{K}
  \Bigl[
    b\,n_k(1-\rho_k)
    \;+\;
    \eta\,q_k^{\kappa}\,\psi(\rho_k)
  \Bigr]
  \\
  \text{s.t.}
  \quad
  \sum_{k=1}^{K} n_k\,\rho_k \;\ge\; S, \,\,\, 0 \le \rho_k \le \rho_{\max}.
\end{multline}
The objective minimizes total model size subject to bounded degradation,
with the constraint enforcing that at least $S$ parameters are pruned globally.

\begin{theorem}[Strong convexity and closed form for pruning]
\label{thm:prune}
The objective in Eq.~\eqref{eq:prune_program} is convex.
If $\psi$ is strongly convex and $q_k^\kappa>0$ for all
$k$, the objective is strongly convex and the primal minimizer is unique. A
uniform modulus requires $q_k\ge q_{\min}>0$, or the smoothed scores above,
because $q_k^\kappa$ vanishes at zero when $\kappa>0$. Feasibility requires
$0\le S\le \rho_{\max}\sum_k n_k$.

For $\psi(\rho) = \rho^{2}$ and $q_k^{\kappa} > 0$ for all $k$,
there exists $\lambda^{\star} \ge 0$ such that
\begin{equation}
  \label{eq:prune_closed}
  \boxed{
  \rho_k^{\star}
  \;=\;
  \operatorname{clip}\!\left(
    \frac{(b + \lambda^{\star})\,n_k}{2\,\eta\,q_k^{\kappa}},
    \;0,\;\rho_{\max}
  \right).
  }
\end{equation}
When $\lambda^{\star} > 0$ the sparsity constraint is active,
$\sum_k n_k\rho_k^{\star} = S$;
when $\lambda^{\star} = 0$, the unconstrained solution already
meets or exceeds the target.
On interior coordinates ($0 < \rho_k^{\star} < \rho_{\max}$),
$\rho_k^{\star}$ is strictly decreasing in $q_k$.  A detailed proof for this theorem is discussed in Appendix~\ref{proof:prune}.
\end{theorem}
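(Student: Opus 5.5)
The objective in Eq.~\eqref{eq:prune_program} is separable. Writing it as $\sum_k f_k(\rho_k)$ with
\[
f_k(\rho)=b\,n_k(1-\rho)+\eta q_k^{\kappa}\psi(\rho),
\]
each $f_k$ is an affine term plus a nonnegative multiple of a convex function, so the sum is convex. If $\psi$ is $\mu$-strongly convex, then $f_k$ is $\eta q_k^\kappa\mu$-strongly convex. The sum is therefore strongly convex with modulus $\eta\mu\min_k q_k^\kappa$, which is positive exactly when every $q_k^\kappa>0$. This is uniform in the data only under $q_k\ge q_{\min}$ or with the smoothed scores.

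The feasible set is the box $[0,\rho_{\max}]^K$ intersected with the half-space $\sum_k n_k\rho_k\ge S$. It is compact and convex, and it is nonempty exactly when $S\le\rho_{\max}\sum_k n_k$, since $\rho\equiv\rho_{\max}$ maximizes $\sum_k n_k\rho_k$. A minimizer therefore exists, and strict convexity makes it unique. All constraints are affine, so KKT conditions are necessary and sufficient without a separate Slater argument, which also covers the edge case $S=\rho_{\max}\sum_k n_k$.

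For $\psi(\rho)=\rho^2$, I dualize only the coupling constraint and keep the box as an explicit domain. The Lagrangian is
\[
\sum_k f_k(\rho_k)+\lambda\Bigl(S-\sum_k n_k\rho_k\Bigr),\qquad \lambda\ge0.
\]
For fixed $\lambda$ it separates into scalar problems: minimize $\eta q_k^\kappa\rho^2-(b+\lambda)n_k\rho$ over $[0,\rho_{\max}]$. Each is a strictly convex parabola, so its box minimizer is the unconstrained stationary point projected onto the interval. This yields $\rho_k(\lambda)$ as the clip in Eq.~\eqref{eq:prune_closed}.

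Next, define $\Phi(\lambda)=\sum_k n_k\rho_k(\lambda)$. It is continuous and nondecreasing in $\lambda$, and strictly increasing while some coordinate is interior. The sign differs from the allocation case: larger $\lambda$ means more pruning. There are two cases.
\begin{itemize}
\item If $\Phi(0)\ge S$, take $\lambda^\star=0$. The point is feasible and complementary slackness holds, so the sparsity target is met or exceeded.
\item Otherwise $\Phi(\lambda)\to\rho_{\max}\sum_k n_k\ge S$ as $\lambda\to\infty$. Indeed, for $\lambda$ large every clip saturates at $\rho_{\max}$, and an explicit bracket is $\lambda\ge\max_k 2\eta q_k^\kappa\rho_{\max}/n_k-b$. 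The intermediate value theorem then gives $\lambda^\star>0$ with $\Phi(\lambda^\star)=S$, so the constraint is active.
\end{itemize}
In either case, $(\rho(\lambda^\star),\lambda^\star)$ satisfies primal feasibility, dual feasibility, complementary slackness, and box-restricted stationarity. These are the KKT conditions, so it is the unique optimum.

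For monotonicity, on interior coordinates $\rho_k^\star=(b+\lambda^\star)n_k/(2\eta q_k^\kappa)$. With $b+\lambda^\star>0$ and $\kappa>0$, this is strictly decreasing in $q_k$. For $\kappa=0$ it is constant in $q_k$, so the monotonicity claim needs $\kappa>0$. Here $\lambda^\star$ is treated as fixed, as in the statement.

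The main obstacle is justifying the box-kept partial Lagrangian. I would check it directly through the KKT multipliers for the bounds, $\mu_k^{0}$ and $\mu_k^{\max}$. A coordinate clipped at $0$ would require $(b+\lambda)n_k\le0$, which is impossible since $b>0$. So the lower clip is never active and $\mu_k^{0}=0$. The upper clip gives $\mu_k^{\max}=(b+\lambda)n_k-2\eta q_k^\kappa\rho_{\max}\ge0$.

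The other delicate point is uniqueness of $\lambda^\star$, which the statement does not claim. If all coordinates are saturated at $S=\rho_{\max}\sum_k n_k$, a whole interval of multipliers works. This is consistent with the statement asserting only existence, and the primal solution is unique regardless.
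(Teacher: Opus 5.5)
Your proof is correct and follows essentially the same route as the paper's: separable convexity and strong convexity, a compact convex feasible set giving a unique minimizer, and a Lagrangian/KKT stationarity condition whose projection onto $[0,\rho_{\max}]$ gives the clipped formula, with the multiplier located through the monotone map $\lambda\mapsto\sum_k n_k\rho_k(\lambda)$. Your version is somewhat tighter than the appendix argument: you construct $\lambda^\star$ explicitly by the intermediate value theorem and verify KKT sufficiency, you note that strict monotonicity in $q_k$ needs $\kappa>0$, and you observe that $\lambda^\star$ can be non-unique on a saturation plateau, where the paper's claim that the dual map is strictly increasing fails.
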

\paragraph{Interpretation.}
The pruning solution mirrors the allocation solution but in the opposite
direction: layers with small $q_k$ (low curvature signal, little reducible
risk) are pruned aggressively, while high-gain layers are protected.
The global multiplier $\lambda^{\star}$ calibrates the pruning depth to
meet the sparsity target $S$, playing the same role as the budget multiplier
in the allocation program.
The quadratic choice $\psi(\rho)=\rho^2$ is a surrogate
selected for analytic transparency. Convexity requires convex $\psi$;
monotone degradation is modeled by nondecreasing $\psi$; uniqueness requires
strict or strong convexity; and a closed-form inverse-gradient representation
requires $\psi'$ to be invertible on the relevant interval.
Together, the two programs provide a unified curvature-aware framework:
allocation enriches layers where gradient information is underexploited,
and pruning removes redundancy where unnecessary.
\paragraph{Transfer Stability Under Score Drift}
The allocation and pruning programs depend on the normalized score vector
$q$. A domain change can move this vector from $q^{(A)}$ to $q^{(B)}$.
Appendix~\ref{sec:transfer} bounds the target-objective excess cost incurred
when the source score vector is used in place of the target score vector.

% \vspace{-1mm}

\subsection{Algorithms: Closed-Form Solutions via One-Dimensional Dual Search}
\label{sec:algorithms}
\vspace{-2mm}

The convex programs of Theorems~\ref{thm:alloc} and~\ref{thm:prune}
share a common structure (see Figure~\ref{fig:method}): strict (or strong) convexity guarantees a unique
primal minimizer, and the KKT conditions reduce each constrained program
to a monotone scalar equation in the Lagrange multiplier $\lambda$.
Both programs therefore admit $O(K\log(1/\varepsilon))$ algorithms
via bisection, with primal variables recovered in closed form at each
function evaluation.
Throughout, $q_k = \zeta_k^2 / \sum_j \zeta_j^2$ denotes the normalized
curvature score from Eq.~\eqref{eq:qk}, $\alpha_k = \alpha c_k$ as
established in Section~\ref{sec:alloc}, and we use the canonical utility
choices $\phi(e) = \log(1+e)$ and $\psi(\rho) = \rho^2$. Since capacity is equivalent to an expert in mixture-of-expert, from here on out, we use expert allocation in place of capacity allocation.

\paragraph{Expert Allocation.}

From Theorem~\ref{thm:alloc}, the unique minimizer of Eq.~\eqref{eq:alloc_program}
takes the closed form in Eq.~\eqref{eq:alloc_closed}
where $\lambda \ge 0$ is the dual variable for the budget constraint.
The sum $\lambda \mapsto \sum_k c_k\,e_k(\lambda)$ is continuous and
strictly decreasing: each active term $e_k(\lambda) > 0$ decreases
strictly in $\lambda$, and once $e_k$ hits zero it stays there.
Therefore $\lambda^\star$ is unique when the constraint is active.
To bracket the bisection, note that
\(e_k(0) = \max\{\gamma q_k^\beta / (\alpha c_k) - 1,\,0\}\)
gives the unconstrained solution. A valid upper bracket is

\[
\lambda_{\max}
=
\max\!\left\{0,\;\max_k\left(\frac{\gamma q_k^\beta}{c_k}-\alpha\right)\right\},
\]
because $e_k(\lambda)=0$ whenever
$\lambda\ge \gamma q_k^\beta/c_k-\alpha$.

\begin{figure*}[t]
\centering
\begin{minipage}[t]{0.48\textwidth}
\begin{algorithm}[H]
\caption{MDL-Inspired Expert Allocation}
\label{alg:alloc}
\begin{algorithmic}[1]
\REQUIRE Scores $\{q_k\}$, costs $\{c_k\}$, budget $B$, hyperparameters $\alpha,\gamma,\beta > 0$
\STATE Define closed-form primal: Eq.~\eqref{eq:alloc_closed}
\STATE 
Set $\lambda_{\min} \gets 0$ and
$\lambda_{\max} \gets \max\{0,\max_k(\gamma q_k^\beta/c_k-\alpha)\}$.
\IF{$\sum_k c_k\,e_k(0) \le B$}
    \STATE $\lambda^\star \gets 0$ \COMMENT{budget constraint inactive}
\ELSE
    \STATE Find $\lambda^\star \in (\lambda_{\min}, \lambda_{\max})$ via bisection on $\sum_k c_k\,e_k(\lambda) = B$
\ENDIF
\STATE return $e_k^\star \gets e_k(\lambda^\star)$ for all $k$
\end{algorithmic}
\end{algorithm}
\end{minipage}
\hfill
\begin{minipage}[t]{0.48\textwidth}
\begin{algorithm}[H]
\caption{MDL-Inspired Layer-wise Pruning}
\label{alg:prune}
\begin{algorithmic}[1]
\REQUIRE Scores $\{q_k\}$, sizes $\{n_k\}$, sparsity target $S$, hyperparameters $b,\eta,\kappa > 0$
\STATE Define projected primal: Eq.~\eqref{eq:prune_closed}
\STATE \parbox[t]{\dimexpr\linewidth-2em}{\raggedright
Set $\lambda_{\min}\gets0$ and
$\lambda_{\max}\gets\max\{0,\max_k(2\eta q_k^\kappa\rho_{\max}/n_k-b)\}$.
}\strut
\IF{$\sum_k n_k\,\rho_k(0) \ge S$}
    \STATE $\lambda^\star \gets 0$ \COMMENT{sparsity constraint inactive}
\ELSE
    \STATE Find $\lambda^\star \in (\lambda_{\min}, \lambda_{\max})$ via bisection on $\sum_k n_k\,\rho_k(\lambda) = S$
\ENDIF
\STATE return $\rho_k^\star \gets \rho_k(\lambda^\star)$ for all $k$
\end{algorithmic}
\end{algorithm}
\end{minipage}
\end{figure*}

\paragraph{Layer-wise Pruning.}

From Theorem~\ref{thm:prune}, the unique minimizer of Eq.~\eqref{eq:prune_program} is obtained as follows.
The Lagrangian subtracts the dual term for the lower-bound constraint in Eq.~\eqref{eq:prune_dual},

The full KKT stationarity condition is
\[
-bn_k+\eta q_k^\kappa\psi'(\rho_k)-\lambda n_k-\mu_k+\nu_k=0,
\qquad \lambda,\mu_k,\nu_k\ge0,
\]
where $\mu_k$ and $\nu_k$ correspond to $-\rho_k\le0$ and
$\rho_k-\rho_{\max}\le0$. On an interior coordinate,
$0<\rho_k<\rho_{\max}$, complementary slackness gives
$\mu_k=\nu_k=0$. For $\psi(\rho)=\rho^2$, this yields,

\[
  \rho_k^{\mathrm{int}}(\lambda)
  \;=\;
  \frac{(b+\lambda)\,n_k}{2\,\eta\,q_k^{\kappa}}.
\]
Strong convexity guarantees a unique primal minimizer,
although a plateau of the projected scalar map can make the dual multiplier
non-unique.
The positive numerator $b + \lambda > 0$ (since $b,\lambda \ge 0$)
ensures $\rho_k^{\mathrm{int}} \ge 0$, consistent with the lower-bound
structure of the sparsity constraint.
Projection onto $[0,\rho_{\max}]$ enforces the box constraints Eq.~\eqref{eq:prune_closed}.%
The sum $\lambda \mapsto \sum_k n_k\rho_k(\lambda)$ is continuous and
non-decreasing: the interior solution increases linearly in $\lambda$,
and clipping to $[0,\rho_{\max}]$ preserves monotonicity since each $\rho_k$
is non-decreasing before and after projection.
Strict increase holds whenever at least one coordinate remains in the
interior $(0,\rho_{\max})$, guaranteeing a unique $\lambda^\star$ when binding.

The unconstrained solution at $\lambda = 0$ gives
$\rho_k(0) = \operatorname{clip}(b\,n_k / (2\eta q_k^\kappa),\,0,\,\rho_{\max})$.
Feasibility under the common layer cap requires
$S\le\rho_{\max}\sum_k n_k$. A valid upper bisection bracket is
\[
\lambda_{\max}
=
\max\!\left\{0,\;\max_k\left(\frac{2\eta q_k^\kappa\rho_{\max}}{n_k}-b\right)\right\},
\]
which places every coordinate at its upper cap.

\begin{remark}[Budget flexibility]
The budget $B$ in Algorithm~\ref{alg:alloc} need not equal the
base-model FLOPs used in our experiments.
Any computationally feasible value of $B$ yields a valid allocation that is
globally optimal for the continuous surrogate program;
the bisection in Algorithm~\ref{alg:alloc} adapts automatically
to the chosen constraint.
\end{remark}
\looseness-1\textbf{Complexity and practical notes.}
Each bisection iteration evaluates the primal sum in $O(K)$ time.
Both algorithms therefore run in $O(K\log(1/\varepsilon))$ time to
achieve dual gap $\varepsilon$.
This is substantially cheaper than general-purpose interior-point
methods, which require $O(K^3)$ per iteration.
The curvature scores $\{q_k\}$ enter only through the closed-form
expressions and need not be recomputed during the dual search,
so the algorithms are fully compatible with any approximation of
$\widetilde{H}_{kk}^{-1}g_k$ (e.g., diagonal Fisher, K-FAC, or
randomized Nystr\"{o}m sketches).
Together, Algorithms~\ref{alg:alloc} and~\ref{alg:prune} provide
practical implementations of the curvature-weighted water-filling
framework developed throughout this section.

\vspace{-1mm}
\subsection{Experimental Setup}
\label{sec:experiment}
\vspace{-2mm}

\paragraph{Models.}
All experiments are conducted on two publicly available 7B-parameter
large language models: Mistral-7B-v0.1 \citet{mistral7b} and
Gemma-7B \citet{gemma}.
Parameter-efficient fine-tuning is performed using LoRA-MoE
\citet{MolA}, which augments each layer with a mixture of low-rank
adapters and serves as the capacity expansion mechanism targeted by
Algorithm~\ref{alg:alloc}.

\paragraph{Score estimation.}
The theoretical score is the normalized Newton-decrement gain $q_k$.
For scalability, the experiments instead use the proxy
\[
\widehat q_k=\frac{\widehat s_k}{\sum_j\widehat s_j},
\]
where $\widehat s_k$ is the chosen nonnegative aggregation of LayerIF scores.
Algorithms~\ref{alg:alloc} and~\ref{alg:prune} are instantiated with
$\widehat q$, not with the theoretical Newton-decrement weights $q$.
The convex optimization results apply to any fixed positive weight vector;
the experiments therefore evaluate the allocation and pruning decision rules
under this proxy.

\paragraph{Expert allocation.}
\textit{Datasets.}
We evaluate on five classification and question-answering benchmarks:
CoLA \citet{cola_dataset}, MRPC \citet{mrpc_dataset},
CommonsenseQA \citet{commonsenseqa}, ScienceQA \citet{scienceQ},
and OpenBookQA \citet{openbookQ}.

\textit{Curvature scores.}
Layer-wise influence scores are computed for each (model, dataset)
pair following the procedures of \citet{layerIF} and \citet{datainf}.
We consider two variants of the influence score pool:
\textbf{All}, in which every computed influence score is used,
and \textbf{+ve}, in which only positively influential samples
(those with negative influence score values, indicating a beneficial
effect on validation loss) are retained.
Both variants are evaluated on Mistral-7B; only the \textbf{+ve}
variant is used for Gemma-7B.

\textit{Budget and allocation.}
Following \citet{meo}, the FLOPs of a multi-expert LoRA model are
empirically upper-bounded by the FLOPs of the base network.
We therefore set the budget $B$ in Algorithm~\ref{alg:alloc} equal
to the FLOPs of a standard single-expert layer in the respective
base model, with scaling factors $\sigma = 0.0276$ for Mistral-7B
and $\sigma = 0.02$ for Gemma-7B (i.e., $B = \sigma \times
\text{base model FLOPs}$).
Per-layer expert counts are computed using Algorithm~\ref{alg:alloc},
and fine-tuning follows the protocols of \citet{alphalora} and
\citet{MolA} for 5 epochs per (model, dataset) pair.
The convex program produces continuous allocations $e_k^*$.
We convert them to integer expert counts by setting $m_k=\lfloor e_k^*\rfloor$.
This rule preserves budget feasibility,
$\sum_k c_k m_k\le\sum_k c_k e_k^*\le B$, but it can leave unused budget and
is not, in general, the exact solution of the corresponding integer program.

\paragraph{Layer-wise Pruning.}
\textit{Datasets.}
Calibration uses the C4 dataset \citet{c4_dataset}.
Zero-shot post-pruning evaluation is performed on seven benchmarks:
RTE \citet{rte_task}, OpenBookQA \citet{openbookQ},
ARC-Easy and ARC-Challenge \citet{ARCchallenge},
HellaSwag \citet{hellaswag}, BoolQ \citet{boolq},
and WinoGrande \citet{WinoGrande}.

\textit{Pruning configurations.}
Layer-wise pruning ratios $\rho_k^\star$ are computed using
Algorithm~\ref{alg:prune} with influence scores carried over from
the expert allocation experiments.
The computed ratios are applied under three structural pruning
configurations --- Magnitude \citet{magnitude_prune},
SparseGPT \citet{sparsGPT}, and Wanda \citet{wanda} ---
using the framework of \citet{alphapruning}.
A global sparsity target of $S = 0.5 \times (\text{total parameters})$
is enforced for both models, corresponding to 50\% sparsity.

\textit{Evaluation.}
We report the mean zero-shot accuracy for expert allocation after finetuning and across all seven evaluation benchmarks for each pruning configuration and compare against the baseline \citet{layerIF}.

Refer to Appendix~\ref{appendix:experiments} for hardware and parameter settings used in the experiments.
\vspace{-2mm}
\subsection{Results and Discussion}
\label{subsec:results}
\vspace{-2mm}
\begin{figure}[tbh]
    \centering
    \includegraphics[width=\columnwidth]{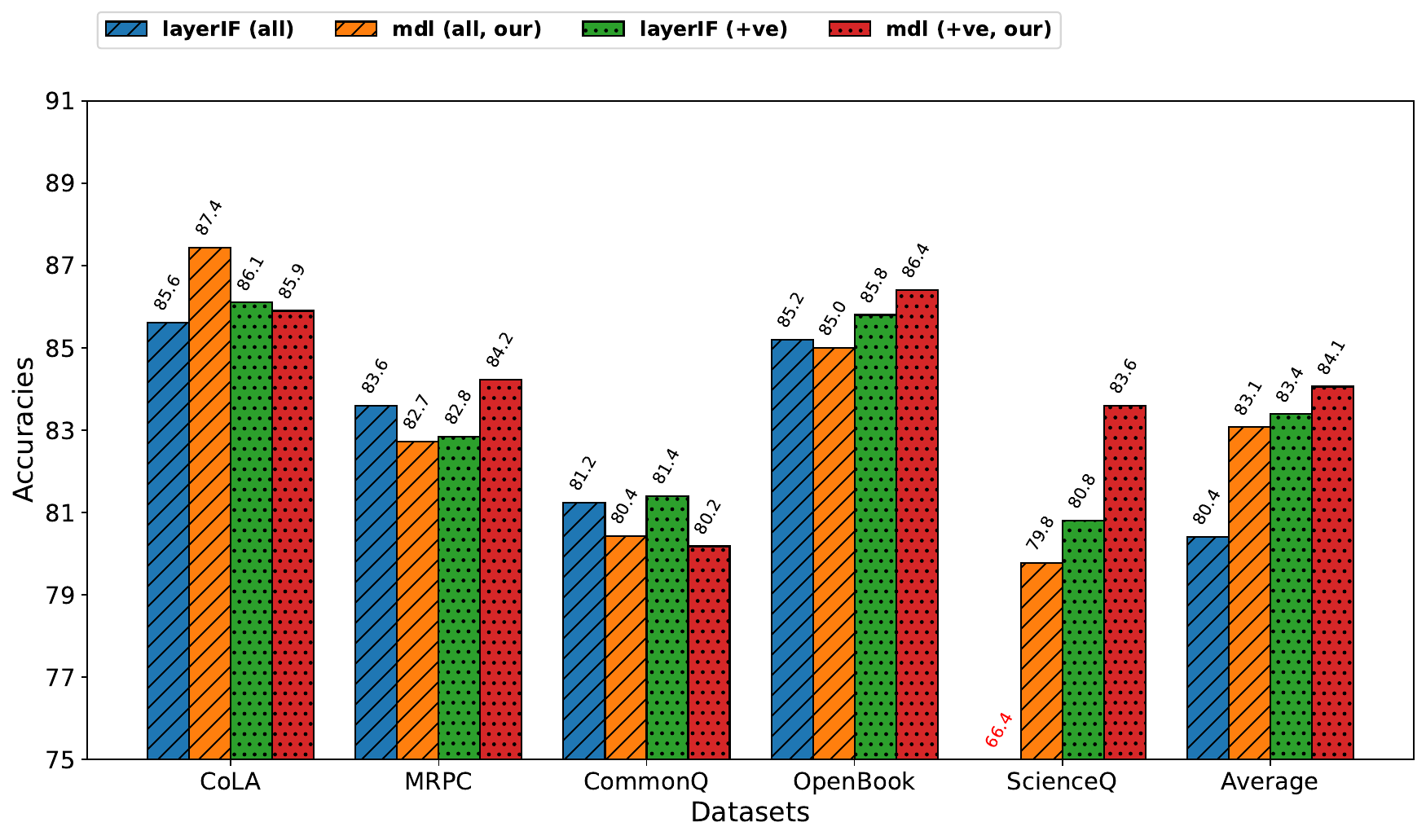}
    \caption{Expert allocation accuracy (\%) on Mistral-7B-v0.1 (5 epochs)}
    \label{fig:mistral_expert_alloc}
\end{figure}
\paragraph{Expert allocation.} Figures~\ref{fig:mistral_expert_alloc} and~\ref{fig:gemma_expert_alloc} report
zero-shot accuracy after five epochs of fine-tuning on Mistral-7B and
Gemma-7B respectively, with per-layer expert counts determined by
Algorithm~\ref{alg:alloc} (MDL) and the LayerIF heuristic baseline.
We note that LayerIF has previously been shown to outperform
AlphaLoRA \citet{alphalora} and MoLA \citet{MolA} on these benchmarks;
we omit those comparisons here for brevity and focus on the
MDL-vs-LayerIF contrast.
On Mistral-7B, the MDL allocation outperforms LayerIF on average
under both influence score variants: \textbf{83.07\%} vs.\ 80.41\%
(All) and \textbf{84.06\%} vs.\ 83.39\% (+ve), corresponding to
absolute improvements of 2.66 and 0.67 percentage points respectively.
The gains are most pronounced on ScienceQA,
where MDL improves over LayerIF by 13.4 points (All) and 2.8 points (+ve),
suggesting that curvature-weighted allocation is particularly beneficial
for knowledge-intensive reasoning tasks where representational capacity
is unevenly demanded across layers.
\begin{figure}[tbh]
    \centering
    \includegraphics[width=\columnwidth]{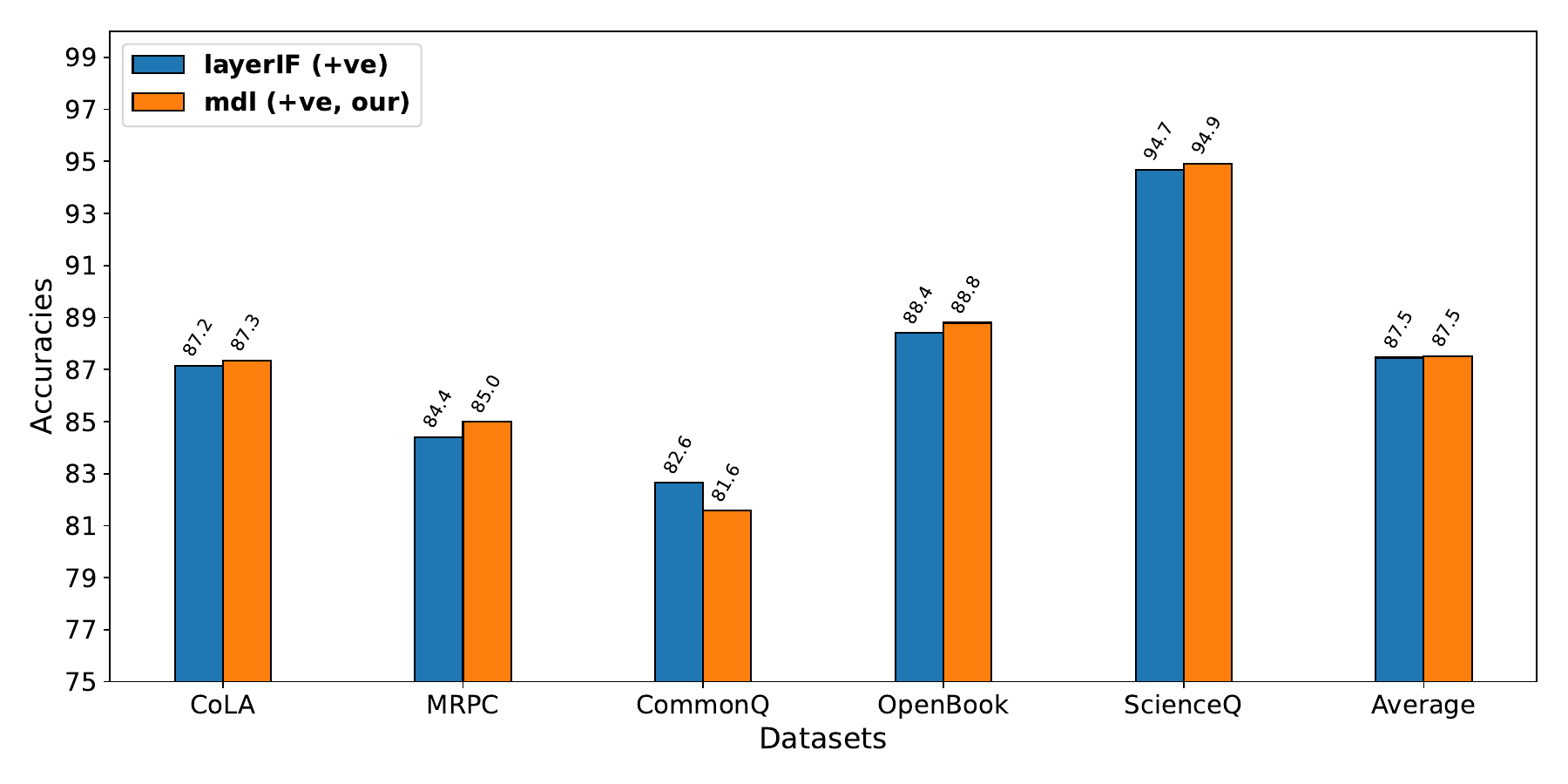}
    \caption{Expert allocation accuracy (\%) on Gemma-7B, +ve variant only (5 epochs)}
    \label{fig:gemma_expert_alloc}
\end{figure}
On Gemma-7B, Algorithm~\ref{alg:alloc} produces identical expert counts
under the \textbf{All} and \textbf{+ve} variant (hence only +ve results are reported).
The MDL allocation yields a marginal improvement to the LayerIF allocation
(\textbf{87.52\%} vs.\ 87.46\%), confirming that the two methods agree
in structure when curvature scores are relatively uniform,
MDL provides a cleaner theoretical justification for the same decision.
These results demonstrate that replacing the knapsack heuristic of
LayerIF with the convex MDL program of Theorem~\ref{thm:alloc}
yields consistent improvements in some cases without additional compute:
Both methods share the same influence-score inputs,
and Algorithm~\ref{alg:alloc} adds only an $O(K\log 1/\varepsilon)$
bisection step. See more in Tables~\ref{tab:mistral_expert_alloc}, \ref{tab:comparison} and~\ref{tab:exp_gemma} for results on Mistral-7B and Gemma-7B respectively. 
\paragraph{Layer-wise Pruning.}
Tables~\ref{tab:pruning_mistral} and~\ref{tab:pruning_gemma} report
mean zero-shot accuracy across seven evaluation benchmarks
at 50\% global sparsity, under Magnitude, Wanda, and SparseGPT
pruning configurations.
We impose the box constraint before optimization, using
$\rho_{\max}=0.51$ for Mistral-7B and $\rho_{\max}=0.55$ for Gemma-7B.
The global 50\% target remains feasible without fully pruning any targeted
parameter block. For Mistral-7B, the combination of a 50\% global target and
$\rho_{\max}=0.51$ leaves little room for heterogeneous layer ratios, which
partly explains the near parity between the two decision rules. Fully pruning
a targeted transformation can cause severe degradation, although residual
connections may still preserve a network-level information path.
\begin{table}[tbh]
\centering
\caption{Mean zero-shot accuracy (\%) across 7 benchmarks for pruning on
         Mistral-7B-v0.1 at 50\% sparsity (layer cap 0.51).
         Rows show the calibration dataset used to compute influence scores.}
\label{tab:pruning_mistral}
\resizebox{\columnwidth}{!}{%
\begin{tabular}{lrrr}
\toprule
\textbf{Calibration dataset} & \textbf{Magnitude} & \textbf{Wanda} & \textbf{SparseGPT} \\
\midrule
CoLA          & 56.47 & 58.59 & 59.66 \\
MRPC          & 56.23 & 58.48 & 60.12 \\
CommonsenseQA & 56.32 & 58.65 & 60.77 \\
OpenBookQA    & 56.22 & 58.46 & 60.05 \\
ScienceQA     & 55.89 & 58.99 & 60.30 \\
\midrule
Average (MDL) & 56.23 & 58.63 & \textbf{60.18} \\
LayerIF       & 56.41 & 58.67 & \textbf{60.18} \\
\bottomrule
\end{tabular}
}
\end{table}
On Mistral-7B (Table~\ref{tab:pruning_mistral}), MDL pruning ratios
match LayerIF closely across all three configurations:
average accuracies are within 0.05 points for Wanda and identical
for SparseGPT (60.18\%), while Magnitude shows a marginal difference
of 0.18 points (56.23\% MDL vs.\ 56.41\% LayerIF).
On Gemma-7B (Table~\ref{tab:pruning_gemma}), MDL outperforms LayerIF
under Magnitude (\textbf{33.34\%} vs.\ 32.91\%) while LayerIF
leads under Wanda (52.3\% vs.\ 49.47\%) and SparseGPT
(50.79\% vs.\ 49.22\%).
The pruning parity between MDL and LayerIF is itself a meaningful
result: it shows that the principled convex program of
Theorem~\ref{thm:prune} recovers the empirically tuned LayerIF ratios
without manual calibration, while providing the theoretical
guarantees --- strong convexity, unique minimizer, and budget feasibility, which are lacking in LayerIF.
The Wanda and SparseGPT gaps on Gemma-7B suggest that the quadratic
degradation model $\psi(\rho) = \rho^2$ may underestimate pruning
sensitivity in certain architectural regimes;
exploring richer $\psi$ (e.g., $\psi(\rho) = -\log(1-\rho)$)
is a natural direction for future work. Overall, the allocation results are strongest on Mistral-7B, while
Gemma-7B allocation is essentially tied with LayerIF and the pruning
results are mixed. The smaller pruning differences are partly explained
by the 50\% global sparsity target and the tight per-layer caps, which
leave limited freedom for heterogeneous pruning ratios. Our optimality
guarantees apply to the continuous surrogate programs; downstream
accuracy can additionally be affected by integer rounding and the
specific structural pruning method.

% \paragraph{Summary.}
% {\color{purple}
% The empirical picture is mixed but informative: the method gives clear
% gains for Mistral-7B allocation, essentially ties LayerIF for Gemma-7B
% allocation, nearly matches it for Mistral-7B pruning, and yields mixed results
% for Gemma-7B pruning. The allocation gains are larger than the pruning gains,
% in part because the 50\% global target and the per-layer caps leave limited
% freedom for heterogeneous pruning ratios. Theoretical optimality refers to the
% continuous surrogate objective, not directly to downstream accuracy after
% integer rounding or structural pruning.
% }

\begin{table}[tbh]
\centering
\caption{Mean zero-shot accuracy (\%) across 7 benchmarks for pruning on Gemma-7B at 50\% sparsity (layer cap 0.55). Rows show the calibration dataset used to compute influence scores.}
\label{tab:pruning_gemma}
\resizebox{\columnwidth}{!}{%
\begin{tabular}{lrrr}
\toprule
\textbf{Calibration dataset} & \textbf{Magnitude} & \textbf{Wanda} & \textbf{SparseGPT} \\
\midrule
CoLA          & 33.03 & 48.36 & 49.28 \\
MRPC          & 33.49 & 52.06 & 50.77 \\
CommonsenseQA & 33.51 & 50.26 & 49.28 \\
OpenBookQA    & 33.52 & 48.09 & 48.11 \\
ScienceQA     & 33.15 & 48.58 & 48.66 \\
\midrule
Average (MDL) & \textbf{33.34} & 49.47 & 49.22 \\
LayerIF       & 32.91          & \textbf{52.30} & \textbf{50.79} \\
\bottomrule
\end{tabular}
}
\end{table}

\section{Conclusion}
\label{sec:conclusion}

We presented a curvature-aware framework for layer-wise capacity
allocation and pruning in large language models, grounded in the
Minimum Description Length principle.
The central quantity, $\zeta_k^2 = g_k^\top\widetilde{H}_{kk}^{-1}g_k$,
measures reducible empirical risk at each layer and drives two convex
programs with unique closed-form solutions, each computable in
$O(K\log 1/\varepsilon)$ via bisection.
Experiments on Mistral-7B and Gemma-7B show clear allocation
gains in one model, an allocation tie in the other, and competitive but mixed
pruning performance. Because both methods use the same influence-derived
proxy scores, the comparison isolates the decision rule. The formal guarantees
are global optimality for the continuous surrogate programs and quadratic
stability with respect to score perturbations under the stated positivity and
strong-convexity assumptions.
See Appendix~\ref{sec:limit} and Appendix~\ref{sec:future} for limitations
and directions for future work, respectively.

\bibliography{uai2026-template}

@inproceedings{
lotteryTicketHypothesis,
title={The Lottery Ticket Hypothesis: Finding Sparse, Trainable Neural Networks},
author={Jonathan Frankle and Michael Carbin},
booktitle={International Conference on Learning Representations},
year={2019},
url={https://openreview.net/forum?id=rJl-b3RcF7},
}

@article{SCHMIDHUBER1997857,
title = {Discovering Neural Nets with Low Kolmogorov Complexity and High Generalization Capability},
journal = {Neural Networks},
volume = {10},
number = {5},
pages = {857-873},
year = {1997},
issn = {0893-6080},
doi = {https://doi.org/10.1016/S0893-6080(96)00127-X},
url = {https://www.sciencedirect.com/science/article/pii/S089360809600127X},
author = {Jürgen Schmidhuber}
}

@article{inequalityOfLayers,
author = {Zhang, Chiyuan and Bengio, Samy and Singer, Yoram},
title = {Are all layers created equal?},
year = {2022},
issue_date = {January 2022},
publisher = {JMLR.org},
volume = {23},
number = {1},
issn = {1532-4435},
journal = {J. Mach. Learn. Res.},
month = jan,
articleno = {67},
numpages = {28}
}

@article{capacityOfNN,
author = {Baldi, Pierre and Vershynin, Roman},
title = {The capacity of feedforward neural networks},
year = {2019},
issue_date = {Aug 2019},
publisher = {Elsevier Science Ltd.},
address = {GBR},
volume = {116},
number = {C},
issn = {0893-6080},
url = {https://doi.org/10.1016/j.neunet.2019.04.009},
doi = {10.1016/j.neunet.2019.04.009},
journal = {Neural Netw.},
month = aug,
pages = {288–311},
numpages = {24}
}

@inproceedings{vitel2025first,
  title={First is {N}ot {R}eally {B}etter {T}han {L}ast: {E}valuating {L}ayer {C}hoice and {A}ggregation {S}trategies in {L}anguage {M}odel {D}ata {I}nfluence {E}stimation},
  author={Vitel, Dmytro and Chhabra, Anshuman},
  booktitle={International {C}onference on {L}earning {R}epresentations},
  year={2026}
}

@inproceedings{botev2017practical,
  title     = {Practical Gauss-Newton Optimisation for Deep Learning},
  author    = {Botev, Aleksandar and Ritter, Hippolyt and Barber, David},
  booktitle = {Proceedings of the 34th International Conference on Machine Learning},
  series    = {Proceedings of Machine Learning Research},
  volume    = {70},
  pages     = {557--565},
  year      = {2017},
  editor    = {Precup, Doina and Teh, Yee Whye},
  publisher = {PMLR},
  address   = {Sydney, Australia},
  url       = {https://proceedings.mlr.press/v70/botev17a.html}
}

@InProceedings{pac-mdl,
author="Blum, Avrim
and Langford, John",
editor="Sch{\"o}lkopf, Bernhard
and Warmuth, Manfred K.",
title="PAC-MDL Bounds",
booktitle="Learning Theory and Kernel Machines",
year="2003",
publisher="Springer Berlin Heidelberg",
address="Berlin, Heidelberg",
pages="344--357",
isbn="978-3-540-45167-9"
}

@article{Prada2025BridgingPC,
  title={Bridging Predictive Coding and MDL: A Two-Part Code Framework for Deep Learning},
  author={Benjamin Prada and Shion Matsumoto and Abdul Malik Zekri and Ankur Mali},
  journal={ArXiv},
  year={2025},
  volume={abs/2505.14635},
  url={https://api.semanticscholar.org/CorpusID:278768469}
}

@inproceedings{Rissanen1989StochasticCI,
  title={Stochastic Complexity in Statistical Inquiry},
  author={Jorma Rissanen},
  booktitle={World Scientific Series in Computer Science},
  year={1989},
  url={https://api.semanticscholar.org/CorpusID:9365056}
}

@inproceedings{Lotfi2023NonVacuousGB,
  title={Non-Vacuous Generalization Bounds for Large Language Models},
  author={Sanae Lotfi and Marc Finzi and Yilun Kuang and Tim G. J. Rudner and Micah Goldblum and Andrew Gordon Wilson},
  booktitle={International Conference on Machine Learning},
  year={2023},
  url={https://api.semanticscholar.org/CorpusID:266573256}
}

@book{ShalevShwartz2014UnderstandingML,
author = {Shalev-Shwartz, Shai and Ben-David, Shai},
address = {Cambridge},
booktitle = {Understanding machine learning : from theory to algorithms},
isbn = {1-107-29801-6},
language = {eng},
publisher = {Cambridge University Press},
title = {Understanding machine learning : from theory to algorithms },
year = {2014},
}

@article{Valiant1984ATO,
  title={A theory of the learnable},
  author={Leslie G. Valiant},
  journal={Commun. ACM},
  year={1984},
  volume={27},
  pages={1134-1142},
  url={https://api.semanticscholar.org/CorpusID:59712}
}

@INPROCEEDINGS{layerImportanceEstimation,
  author={Liu, Hongyang and Elkerdawy, Sara and Ray, Nilanjan and Elhoushi, Mostafa},
  booktitle={2021 IEEE/CVF Conference on Computer Vision and Pattern Recognition Workshops (CVPRW)}, 
  title={Layer Importance Estimation with Imprinting for Neural Network Quantization}, 
  year={2021},
  volume={},
  number={},
  pages={2408-2417},
  doi={10.1109/CVPRW53098.2021.00273}}

@misc{AVSS,
      title={Layer Importance and Hallucination Analysis in Large Language Models via Enhanced Activation Variance-Sparsity}, 
      author={Zichen Song and Sitan Huang and Yuxin Wu and Zhongfeng Kang},
      year={2024},
      eprint={2411.10069},
      archivePrefix={arXiv},
      primaryClass={cs.CL},
      url={https://arxiv.org/abs/2411.10069}, 
}

@misc{mistral7b,
      title={Mistral 7B}, 
      author={Albert Q. Jiang and Alexandre Sablayrolles and Arthur Mensch and Chris Bamford and Devendra Singh Chaplot and Diego de las Casas and Florian Bressand and Gianna Lengyel and Guillaume Lample and Lucile Saulnier and Lélio Renard Lavaud and Marie-Anne Lachaux and Pierre Stock and Teven Le Scao and Thibaut Lavril and Thomas Wang and Timothée Lacroix and William El Sayed},
      year={2023},
      eprint={2310.06825},
      archivePrefix={arXiv},
      primaryClass={cs.CL},
      url={https://arxiv.org/abs/2310.06825}, 
}

@misc{gemma,
      title={Gemma: Open Models Based on Gemini Research and Technology}, 
      author={Gemma Team and Thomas Mesnard and Cassidy Hardin and Robert Dadashi and Surya Bhupatiraju and Shreya Pathak and Laurent Sifre and Morgane Rivière and Mihir Sanjay Kale and Juliette Love and Pouya Tafti and Léonard Hussenot and Pier Giuseppe Sessa and Aakanksha Chowdhery and Adam Roberts and Aditya Barua and Alex Botev and Alex Castro-Ros and Ambrose Slone and Amélie Héliou and Andrea Tacchetti and Anna Bulanova and Antonia Paterson and Beth Tsai and Bobak Shahriari and Charline Le Lan and Christopher A. Choquette-Choo and Clément Crepy and Daniel Cer and Daphne Ippolito and David Reid and Elena Buchatskaya and Eric Ni and Eric Noland and Geng Yan and George Tucker and George-Christian Muraru and Grigory Rozhdestvenskiy and Henryk Michalewski and Ian Tenney and Ivan Grishchenko and Jacob Austin and James Keeling and Jane Labanowski and Jean-Baptiste Lespiau and Jeff Stanway and Jenny Brennan and Jeremy Chen and Johan Ferret and Justin Chiu and Justin Mao-Jones and Katherine Lee and Kathy Yu and Katie Millican and Lars Lowe Sjoesund and Lisa Lee and Lucas Dixon and Machel Reid and Maciej Mikuła and Mateo Wirth and Michael Sharman and Nikolai Chinaev and Nithum Thain and Olivier Bachem and Oscar Chang and Oscar Wahltinez and Paige Bailey and Paul Michel and Petko Yotov and Rahma Chaabouni and Ramona Comanescu and Reena Jana and Rohan Anil and Ross McIlroy and Ruibo Liu and Ryan Mullins and Samuel L Smith and Sebastian Borgeaud and Sertan Girgin and Sholto Douglas and Shree Pandya and Siamak Shakeri and Soham De and Ted Klimenko and Tom Hennigan and Vlad Feinberg and Wojciech Stokowiec and Yu-hui Chen and Zafarali Ahmed and Zhitao Gong and Tris Warkentin and Ludovic Peran and Minh Giang and Clément Farabet and Oriol Vinyals and Jeff Dean and Koray Kavukcuoglu and Demis Hassabis and Zoubin Ghahramani and Douglas Eck and Joelle Barral and Fernando Pereira and Eli Collins and Armand Joulin and Noah Fiedel and Evan Senter and Alek Andreev and Kathleen Kenealy},
      year={2024},
      eprint={2403.08295},
      archivePrefix={arXiv},
      primaryClass={cs.CL},
      url={https://arxiv.org/abs/2403.08295}, 
}

@inproceedings{
wilson2025position,
title={Position: Deep Learning is Not So Mysterious or Different},
author={Andrew Gordon Wilson},
booktitle={Forty-second International Conference on Machine Learning Position Paper Track},
year={2025},
url={https://openreview.net/forum?id=42Au7FoD8F}
}

@inproceedings{meo,
    title = "Merging Experts into One: Improving Computational Efficiency of Mixture of Experts",
    author = "He, Shwai  and
      Fan, Run-Ze  and
      Ding, Liang  and
      Shen, Li  and
      Zhou, Tianyi  and
      Tao, Dacheng",
    editor = "Bouamor, Houda  and
      Pino, Juan  and
      Bali, Kalika",
    booktitle = "Proceedings of the 2023 Conference on Empirical Methods in Natural Language Processing",
    month = dec,
    year = "2023",
    address = "Singapore",
    publisher = "Association for Computational Linguistics",
    url = "https://aclanthology.org/2023.emnlp-main.907/",
    doi = "10.18653/v1/2023.emnlp-main.907",
    pages = "14685--14691"
}

@inproceedings{layerIF,
title={Layer{IF}: {E}stimating {L}ayer {Q}uality for {L}arge {L}anguage {M}odels using {I}nfluence {F}unctions},
author={Hadi Askari and Shivanshu Gupta and Fei Wang and Anshuman Chhabra and Muhao Chen},
booktitle={Advances in {N}eural {I}nformation {P}rocessing {S}ystems},
year={2025}
}

@inproceedings{alphapruning,
title={AlphaPruning: Using Heavy-Tailed Self Regularization Theory for Improved Layer-wise Pruning of Large Language Models},
author={Lu, Haiquan and Zhou, Yefan and Liu, Shiwei and Wang, Zhangyang and Mahoney, Michael W and Yang, Yaoqing},
booktitle={Thirty-eighth Conference on Neural Information Processing Systems},
year={2024}
}

@misc{MolA,
      title={Higher Layers Need More LoRA Experts}, 
      author={Chongyang Gao and Kezhen Chen and Jinmeng Rao and Baochen Sun and Ruibo Liu and Daiyi Peng and Yawen Zhang and Xiaoyuan Guo and Jie Yang and VS Subrahmanian},
      year={2024},
      eprint={2402.08562},
      archivePrefix={arXiv},
      primaryClass={cs.CL},
      url={https://arxiv.org/abs/2402.08562}, 
}

@misc{alphalora,
      title={AlphaLoRA: Assigning LoRA Experts Based on Layer Training Quality}, 
      author={Peijun Qing and Chongyang Gao and Yefan Zhou and Xingjian Diao and Yaoqing Yang and Soroush Vosoughi},
      year={2024},
      eprint={2410.10054},
      archivePrefix={arXiv},
      primaryClass={cs.CL},
      url={https://arxiv.org/abs/2410.10054}, 
}

@misc{sparsGPT,
      title={SparseGPT: Massive Language Models Can Be Accurately Pruned in One-Shot}, 
      author={Elias Frantar and Dan Alistarh},
      year={2023},
      eprint={2301.00774},
      archivePrefix={arXiv},
      primaryClass={cs.LG},
      url={https://arxiv.org/abs/2301.00774}, 
}

@inproceedings{magnitude_prune,
author = {Han, Song and Pool, Jeff and Tran, John and Dally, William J.},
title = {Learning both weights and connections for efficient neural networks},
year = {2015},
publisher = {MIT Press},
address = {Cambridge, MA, USA},
booktitle = {Proceedings of the 29th International Conference on Neural Information Processing Systems - Volume 1},
pages = {1135–1143},
numpages = {9},
location = {Montreal, Canada},
series = {NIPS'15}
}

@inproceedings{mrpc_dataset,
author = {Dolan, Bill and Brockett, Chris},
title = {Automatically Constructing a Corpus of Sentential Paraphrases},
booktitle = {Third International Workshop on Paraphrasing (IWP2005)},
year = {2005},
month = {January},
publisher = {Asia Federation of Natural Language Processing}}

@misc{rte_task,
      title={GLUE: A Multi-Task Benchmark and Analysis Platform for Natural Language Understanding}, 
      author={Alex Wang and Amanpreet Singh and Julian Michael and Felix Hill and Omer Levy and Samuel R. Bowman},
      year={2019},
      eprint={1804.07461},
      archivePrefix={arXiv},
      primaryClass={cs.CL},
      url={https://arxiv.org/abs/1804.07461}, 
}

@article{cola_dataset,
    author = {Warstadt, Alex and Singh, Amanpreet and Bowman, Samuel R.},
    title = {Neural Network Acceptability Judgments},
    journal = {Transactions of the Association for Computational Linguistics},
    volume = {7},
    pages = {625-641},
    year = {2019},
    month = {09},
    issn = {2307-387X},
    doi = {10.1162/tacl_a_00290},
    url = {https://doi.org/10.1162/tacl_a_00290},
    eprint = {https://direct.mit.edu/tacl/article-pdf/doi/10.1162/tacl_a_00290/1923083/tacl_a_00290.pdf},
}

@inproceedings{scienceQ,
    title={Learn to Explain: Multimodal Reasoning via Thought Chains for Science Question Answering},
    author={Lu, Pan and Mishra, Swaroop and Xia, Tony and Qiu, Liang and Chang, Kai-Wei and Zhu, Song-Chun and Tafjord, Oyvind and Clark, Peter and Ashwin Kalyan},
    booktitle={The 36th Conference on Neural Information Processing Systems (NeurIPS)},
    year={2022}
}

@inproceedings{commonsenseqa,
    title = "{C}ommonsense{QA}: A Question Answering Challenge Targeting Commonsense Knowledge",
    author = "Talmor, Alon  and
      Herzig, Jonathan  and
      Lourie, Nicholas  and
      Berant, Jonathan",
    editor = "Burstein, Jill  and
      Doran, Christy  and
      Solorio, Thamar",
    booktitle = "Proceedings of the 2019 Conference of the North {A}merican Chapter of the Association for Computational Linguistics: Human Language Technologies, Volume 1 (Long and Short Papers)",
    month = jun,
    year = "2019",
    address = "Minneapolis, Minnesota",
    publisher = "Association for Computational Linguistics",
    url = "https://aclanthology.org/N19-1421/",
    doi = "10.18653/v1/N19-1421",
    pages = "4149--4158"
}

@inproceedings{openbookQ,
    title = "Can a Suit of Armor Conduct Electricity? A New Dataset for Open Book Question Answering",
    author = "Mihaylov, Todor  and
      Clark, Peter  and
      Khot, Tushar  and
      Sabharwal, Ashish",
    editor = "Riloff, Ellen  and
      Chiang, David  and
      Hockenmaier, Julia  and
      Tsujii, Jun{'}ichi",
    booktitle = "Proceedings of the 2018 Conference on Empirical Methods in Natural Language Processing",
    month = oct # "-" # nov,
    year = "2018",
    address = "Brussels, Belgium",
    publisher = "Association for Computational Linguistics",
    url = "https://aclanthology.org/D18-1260/",
    doi = "10.18653/v1/D18-1260",
    pages = "2381--2391"
}

@inproceedings{boolq,
    title = "{B}ool{Q}: Exploring the Surprising Difficulty of Natural Yes/No Questions",
    author = "Clark, Christopher  and
      Lee, Kenton  and
      Chang, Ming-Wei  and
      Kwiatkowski, Tom  and
      Collins, Michael  and
      Toutanova, Kristina",
    editor = "Burstein, Jill  and
      Doran, Christy  and
      Solorio, Thamar",
    booktitle = "Proceedings of the 2019 Conference of the North {A}merican Chapter of the Association for Computational Linguistics: Human Language Technologies, Volume 1 (Long and Short Papers)",
    month = jun,
    year = "2019",
    address = "Minneapolis, Minnesota",
    publisher = "Association for Computational Linguistics",
    url = "https://aclanthology.org/N19-1300/",
    doi = "10.18653/v1/N19-1300",
    pages = "2924--2936"
}

@inproceedings{hellaswag,
    title={HellaSwag: Can a Machine Really Finish Your Sentence?},
    author={Zellers, Rowan and Holtzman, Ari and Bisk, Yonatan and Farhadi, Ali and Choi, Yejin},
    booktitle ={Proceedings of the 57th Annual Meeting of the Association for Computational Linguistics},
    year={2019}
}

@article{WinoGrande,
author = {Sakaguchi, Keisuke and Bras, Ronan Le and Bhagavatula, Chandra and Choi, Yejin},
title = {WinoGrande: an adversarial winograd schema challenge at scale},
year = {2021},
issue_date = {September 2021},
publisher = {Association for Computing Machinery},
address = {New York, NY, USA},
volume = {64},
number = {9},
issn = {0001-0782},
url = {https://doi.org/10.1145/3474381},
doi = {10.1145/3474381},
journal = {Commun. ACM},
month = aug,
pages = {99–106},
numpages = {8}
}

@misc{ARCchallenge,
      title={Think you have Solved Question Answering? Try ARC, the AI2 Reasoning Challenge}, 
      author={Peter Clark and Isaac Cowhey and Oren Etzioni and Tushar Khot and Ashish Sabharwal and Carissa Schoenick and Oyvind Tafjord},
      year={2018},
      eprint={1803.05457},
      archivePrefix={arXiv},
      primaryClass={cs.AI},
      url={https://arxiv.org/abs/1803.05457}, 
}

@article{datainf,
  title={Datainf: Efficiently estimating data influence in lora-tuned llms and diffusion models},
  author={Kwon, Yongchan and Wu, Eric and Wu, Kevin and Zou, James},
  journal={arXiv preprint arXiv:2310.00902},
  year={2023}
}

@misc{c4_dataset,
      title={Exploring the Limits of Transfer Learning with a Unified Text-to-Text Transformer}, 
      author={Colin Raffel and Noam Shazeer and Adam Roberts and Katherine Lee and Sharan Narang and Michael Matena and Yanqi Zhou and Wei Li and Peter J. Liu},
      year={2023},
      eprint={1910.10683},
      archivePrefix={arXiv},
      primaryClass={cs.LG},
      url={https://arxiv.org/abs/1910.10683}, 
}

@article{wanda,
  title={A Simple and Effective Pruning Approach for Large Language Models}, 
  author={Sun, Mingjie and Liu, Zhuang and Bair, Anna and Kolter, J. Zico},
  year={2023},
  journal={arXiv preprint arXiv:2306.11695}
}

@inproceedings{martens2015optimizing,
  title={Optimizing neural networks with kronecker-factored approximate curvature},
  author={Martens, James and Grosse, Roger},
  booktitle={International conference on machine learning},
  pages={2408--2417},
  year={2015},
  organization={PMLR}
}

@article{Rissanen1978Modeling,
  title={Modeling By Shortest Data Description*},
  author={Jorma Rissanen},
  journal={Autom.},
  year={1978},
  volume={14},
  pages={465-471},
  url={https://api.semanticscholar.org/CorpusID:30140639}
}

@inproceedings{lecun1989obd,
 author = {LeCun, Yann and Denker, John and Solla, Sara},
 booktitle = {Advances in Neural Information Processing Systems},
 editor = {D. Touretzky},
 pages = {},
 publisher = {Morgan-Kaufmann},
 title = {Optimal Brain Damage},
 url = {https://proceedings.neurips.cc/paper_files/paper/1989/file/6c9882bbac1c7093bd25041881277658-Paper.pdf},
 volume = {2},
 year = {1989}
}

@INPROCEEDINGS{hassibi1993obs,
	author={Hassibi, B. and Stork, D.G. and Wolff, G.J.},
	booktitle={IEEE International Conference on Neural Networks}, 
	title={Optimal Brain Surgeon and general network pruning}, 
	year={1993},
	volume={},
	number={},
	pages={293-299 vol.1},
	doi={10.1109/ICNN.1993.298572}}

@inproceedings{shazeer2017outrageously,
	title={ Outrageously Large Neural Networks: The Sparsely-Gated Mixture-of-Experts Layer},
	author={Noam Shazeer and *Azalia Mirhoseini and *Krzysztof Maziarz and Andy Davis and Quoc Le and Geoffrey Hinton and Jeff Dean},
	booktitle={International Conference on Learning Representations},
	year={2017},
	url={https://openreview.net/forum?id=B1ckMDqlg}
}

@article{fedus2022switch,
	title={Switch transformers: Scaling to trillion parameter models with simple and efficient sparsity},
	author={Fedus, William and Zoph, Barret and Shazeer, Noam},
	journal={Journal of Machine Learning Research},
	volume={23},
	number={120},
	pages={1--39},
	year={2022}
}

@inproceedings{koh2017understanding,
  title={Understanding black-box predictions via influence functions},
  author={Koh, Pang Wei and Liang, Percy},
  booktitle={International conference on machine learning},
  pages={1885--1894},
  year={2017},
  organization={PMLR}
}

@book{boyd2004convex,
  author    = {Boyd, Stephen and Vandenberghe, Lieven},
  title     = {Convex Optimization},
  publisher = {Cambridge University Press},
  year      = {2004},
  address   = {Cambridge, UK},
  note      = {Available at \url{https://web.stanford.edu/~boyd/cvxbook/}}
}

\newpage

\onecolumn

% \title{Curvature-Weighted Capacity Allocation: A Minimum Description Length Framework for Layer-Adaptive Large Language Model Optimization\\(Supplementary Material)}
\maketitle

\section*{Supplementary Overview}
\addcontentsline{toc}{section}{Supplementary Overview}
This supplement follows the same conceptual progression as the main paper.
Section~\ref{sec:related_work} places the proposed curvature-weighted framework
in the context of layer-quality estimation, second-order compression, MDL, and
adaptive capacity. Section~\ref{app:proofs} then supplies the detailed proofs
and the surrogate-regularization analysis underlying the main theoretical
claims. Section~\ref{sec:transfer} develops the transfer-stability result and
its cross-task diagnostic. The remaining sections provide implementation
details, complete experimental results, ablations, proxy validation, statistical
analysis, limitations, future work and the LLM-use disclosure.

\clearpage
\onecolumn
\setcounter{tocdepth}{2}
\tableofcontents
\clearpage
% \twocolumn

\appendix

\section{Related Work}
\label{sec:related_work}

The main paper combines a layer-wise curvature score with globally constrained
allocation and pruning programs. We first position these two ingredients
relative to prior work, beginning with layer-quality estimation and then moving
to second-order compression, MDL, and adaptive-capacity models.

\subsection{Layer Quality Estimation}
Quantifying the per-layer contribution to model performance is central
to pruning, expert allocation in mixture-of-experts (MoE) models,
and network compression.
\citet{layerIF} proposed LayerIF, which uses influence functions to
estimate layer quality and applies it to predict expert counts in
Mistral-7B and to determine layer-wise pruning ratios.
\citet{AVSS} introduced the Activation Variance-Sparsity Score (AVSS),
a combined measure of normalized activation variance and sparsity
that quantifies each layer's contribution to model performance.
\citet{layerImportanceEstimation} developed an accuracy-aware criterion
for ranking layer importance, used to guide quantization decisions.
A common limitation shared by these approaches is the absence of
curvature information: gradient magnitudes and activation statistics
do not account for the local geometry of the loss landscape.
Our gain $\zeta_k^2$ addresses this gap directly by incorporating
the inverse Hessian block, yielding a measure of \emph{reducible risk}
rather than raw gradient magnitude.

\subsection{Second-Order Methods for Model Compression}
Second-order information has been used for pruning since the seminal
Optimal Brain Damage \citet{lecun1989obd} and Optimal Brain Surgeon frameworks \citet{hassibi1993obs}, which identify parameters to remove
by computing the Hessian of the training loss.
More recent work scales these ideas to modern architectures using
diagonal Fisher approximations \citet{martens2015optimizing},
Kronecker-factored curvature \citet{botev2017practical}, and
randomized sketches.
Our work extends this tradition from individual weight pruning to
\emph{layer-level} capacity allocation, and introduces a convex
program that jointly optimizes the allocation across all layers under
a global budget.

\subsection{Generalization Bounds and Minimum Description Length}
Generalization theory asks how well a model trained on finite data
performs on unseen examples.
\citet{Valiant1984ATO} formalized PAC learning, establishing that
generalization depends on hypothesis space complexity and sample size.
\citet{ShalevShwartz2014UnderstandingML} extend this to infinite
hypothesis spaces via VC dimension.
For LLMs, classical bounds are vacuous due to the enormous number of
parameters; \citet{Lotfi2023NonVacuousGB} address this by introducing
compression-based bounds via SubLoRA, a low-dimensional nonlinear
parametrization that yields non-vacuous guarantees.
Our MDL objective is directly motivated by this line of work:
minimizing description length simultaneously controls generalization
and penalizes unnecessary model complexity, grounding our convex
programs in information-theoretic principles.

\subsection{Mixture-of-Experts and Adaptive Capacity}
Sparse MoE models \citet{shazeer2017outrageously, fedus2022switch}
increase model capacity without proportional compute cost by routing
tokens to a subset of expert layers.
Existing routing mechanisms are learned end-to-end and do not
explicitly account for the curvature-adjusted utility of adding
capacity at a given layer.
Our allocation program provides a principled, optimization-based
alternative that is complementary to learned routing: given a fixed
routing mechanism, it determines \emph{how much} capacity to assign
to each layer based on reducible risk.

Together, these lines of work motivate the two-part construction analyzed next:
a layer-wise second-order score that estimates locally reducible risk, and a
convex resource-allocation rule that converts those scores into globally
feasible decisions.

\section{Mathematical Foundations and Proofs}
\label{app:proofs}
With the surrounding literature established, we now give the detailed
mathematical arguments omitted from the main paper. The presentation follows
the logical dependency of the framework: first the layer-restricted quadratic
optimum, then the allocation and pruning programs, and finally the effect of
surrogate regularization.

\subsection{Layer-Restricted Quadratic Optimum}
\label{proof:lemma1}
We begin with Lemma~\ref{lem:layeropt}, because its regularized Newton step
defines the layer gain used by every subsequent optimization program.
Recall that $\widetilde Q_k(d_k)$ is the layer-restricted quadratic surrogate for
the change in empirical loss produced by a perturbation $d_k$ to layer $k$.
\begin{proof}
  The gradient of $\widetilde{Q}_k$ with respect to $d_k$ is
  $\nabla_{d_k}\widetilde{Q}_k = g_k + \widetilde{H}_{kk}\,d_k$.
  Setting this to zero and invoking $\widetilde{H}_{kk} \succ 0$ yields the unique
  stationary point $d_k^{\star} = -\widetilde{H}_{kk}^{-1}g_k$.
  Substituting back,
  \begin{align*}
    \widetilde{Q}_k(d_k^{\star})
    &= g_k^{\top}(-\widetilde{H}_{kk}^{-1}g_k)
      + \frac{1}{2}(-\widetilde{H}_{kk}^{-1}g_k)^{\top}
        \widetilde{H}_{kk}
        (-\widetilde{H}_{kk}^{-1}g_k) \\
    &= -\,g_k^{\top}\widetilde{H}_{kk}^{-1}g_k
      + \frac{1}{2}\,g_k^{\top}\widetilde{H}_{kk}^{-1}g_k
    \;=\;
    -\frac{1}{2}\,g_k^{\top}\widetilde{H}_{kk}^{-1}g_k. \qedhere
  \end{align*}
\end{proof}

\subsection{Closed-Form Allocation and Pruning Solutions}
\label{appendix:mdl_programs}
Having established the layer-wise gain, we next show how the two global
resource-allocation programs inherit convexity and reduce to one-dimensional
dual searches.

\subsubsection{Allocation: Convexity and Closed Form}
We first prove Theorem~\ref{thm:alloc}, which characterizes the optimal
continuous capacity allocation $e_k^\star$ under the global budget.
\begin{proof}
\label{proof:theorem2}
Each term $\alpha c_k e_k$ is linear.
Since $\phi$ is concave, $-\phi$ is convex, and since
$\gamma q_k^{\beta} \ge 0$, the term $-\gamma q_k^{\beta}\phi(e_k)$
is convex on $e_k \ge 0$.
The sum of convex functions over a convex feasible set is convex.
When $\phi$ is strictly concave and $q_k > 0$,
each term $-\gamma q_k^{\beta}\phi(e_k)$ is strictly convex,
giving strict convexity of the full objective.

Slater's condition \citet{boyd2004convex} holds (any strictly feasible point satisfies the
constraint with strict inequality), so strong duality applies.
The Lagrangian is
\begin{equation}
  \mathcal{L}(e,\lambda)
  =
  \sum_k
  \Bigl[\alpha c_k e_k - \gamma q_k^{\beta}\phi(e_k)\Bigr]
  + \lambda\!\left(\sum_k c_k e_k - B\right)
  ,\hspace{5em} \lambda \ge 0.    
\end{equation}

Stationarity at an interior point $e_k > 0$ gives
\[
  (\alpha + \lambda)\,c_k
  \;-\;
  \gamma q_k^{\beta}\,\phi'(e_k^{\star})
  \;=\; 0.
\]
For $\phi(e) = \log(1+e)$, $\phi'(e) = 1/(1+e)$, so
\[
  (\alpha + \lambda^{\star})\,c_k
  \;=\;
  \frac{\gamma q_k^{\beta}}{1 + e_k^{\star}},
\]
which yields Eq.~\eqref{eq:alloc_sol} after incorporating non-negativity.
Since $e_k(\lambda)$ is strictly decreasing and continuous in $\lambda$,
the sum $\sum_k c_k e_k(\lambda)$ is strictly decreasing, so
$\lambda^{\star}$ is unique and can be computed in
$O(K\log(1/\varepsilon))$ time by bisection.
Strict monotonicity of $e_k^{\star}$ in $q_k$ follows directly
from Eq.~\eqref{eq:alloc_sol}.
\end{proof}

\subsubsection{Pruning: Strong Convexity and Closed Form}
The pruning program is complementary to allocation: instead of distributing
additional capacity, it assigns layer-wise sparsity while enforcing a global
target. We now prove the closed-form characterization in
Eq.~\eqref{eq:prune_closed}.
\begin{proof}
\label{proof:prune}
The term $-bn_k\rho_k$ is linear.
If $\psi$ is strongly convex, $\eta q_k^{\kappa}\psi(\rho_k)$ is strongly
convex for $q_k > 0$, making the full objective strongly convex.
The feasible set $[0,1]^K \cap \{\sum_k n_k\rho_k \ge S\}$ is convex and
compact, so a unique minimizer exists.

The constraint $\sum_k n_k\rho_k \ge S$ is a lower-bound constraint,
so the Lagrangian is formed by subtracting the dual term:
\begin{equation}
\begin{array}{rl}
      \mathcal{L}(\rho,\lambda)=&
  \sum_k
  \Bigl[b\,n_k(1-\rho_k) + \eta q_k^{\kappa}\psi(\rho_k)\Bigr]
  - \lambda\!\Bigl(\sum_k n_k\rho_k - S\Bigr) - \sum_{k=1}^{K} \mu_k \rho_k + \sum_{k=1}^{K} \nu_k (\rho_k - \rho_{\max}),  \\\\
    &\lambda\ge 0, \,\,\, \mu_k \ge 0, \,\,\, \nu_k \ge 0
\end{array}
\label{eq:prune_dual}
\end{equation}

Stationarity at an interior point $\rho_k \in (0,\rho_{\max})$ gives

\[
  -b\,n_k
  + \eta\,q_k^{\kappa}\,\psi'(\rho_k^{\star})
  - \lambda^{\star}\,n_k - \mu_k + \nu_k
  \;=\; 0.
\]
For $\psi(\rho) = \rho^{2}$, $\psi'(\rho) = 2\rho$, so
\[
  \rho_k^{\star}
  \;=\;
  \frac{(b + \lambda^{\star})\,n_k}{2\,\eta\,q_k^{\kappa}},
\]
and box constraints $[0,\rho_{\max}]$ induce clipping, giving Eq.~\eqref{eq:prune_closed}.
Since $\lambda^{\star} \ge 0$, the numerator $b + \lambda^{\star} > 0$,
so $\rho_k^{\star} > 0$ on the interior.

The mapping $\lambda \mapsto \sum_k n_k\rho_k(\lambda)$ is strictly
increasing (larger $\lambda$ increases every $\rho_k$ through the
numerator $b + \lambda$), so $\lambda^{\star}$ is unique when binding
and can be found by bisection.
Strict decrease of $\rho_k^{\star}$ in $q_k$ follows directly
from Eq.~\eqref{eq:prune_closed}.
\end{proof}

\subsection{Validity Under Surrogate Regularization}
\label{sec:bias}

The preceding proofs establish optimality for the regularized quadratic
surrogate. We now connect that surrogate back to the actual empirical
objective and state the additional separation condition needed for layer
rankings to be certified.

The step $\Delta^{\star} = E_k^{\top}d_k^{\star}$ is derived from the surrogate
$\widetilde{H}_{kk}$, not from $H_{kk}$ directly.
We now show that the resulting ranking of layers by $\zeta_k^2$ is consistent
with the true objective decrease, and characterize the approximation gap explicitly.

Applying Eq.~\eqref{eq:taylor} to $\Delta^{\star}$ and using
$\Delta^{\star\top} H_{kk}\,\Delta^{\star}
 = \Delta^{\star\top}\widetilde{H}_{kk}\Delta^{\star} - \tau\|\Delta^{\star}\|^{2}$
gives
\begin{equation}
  \label{eq:truedec}
  L(\theta + \Delta^{\star}) - L(\theta)
  \;=\;
  -\frac{1}{2}\,\zeta_k^{2}
  \;-\;
  \frac{\tau}{2}\,\|\Delta^{\star}\|^{2}
  \;+\;
  R(\Delta^{\star}).
\end{equation}

\noindent
We bound the bias term as follows.
Since $d_k^{\star} = -\widetilde{H}_{kk}^{-1}g_k$ and
$\|E_k^{\top}d\| = \|d\|$ for any $d$,
\[
  \|\Delta^{\star}\|^{2}
  \;=\;
  \|d_k^{\star}\|^{2}
  \;=\;
  g_k^{\top}\widetilde{H}_{kk}^{-2}\,g_k.
\]
The spectral inequality $\widetilde{H}_{kk}^{-2} \preceq \lambda_{\min}(\widetilde{H}_{kk})^{-1}\widetilde{H}_{kk}^{-1}$
(which follows because all eigenvalues of $\widetilde{H}_{kk}^{-1}$ lie in
$(0,\lambda_{\min}(\widetilde H_{kk})^{-1}]$) then yields
\begin{equation}
  \label{eq:biasbound}
  \frac{\tau}{2}\,\|\Delta^{\star}\|^{2}
  \;\le\;
  \frac{\tau}{2\,\lambda_{\min}(\widetilde{H}_{kk})}\,\zeta_k^{2}.
\end{equation}
Combining Eq.~\eqref{eq:truedec} and Eq.~\eqref{eq:biasbound},
\[
  L(\theta + \Delta^{\star}) - L(\theta)
  \;\ge\;
  -\frac{1}{2}\left(1 + \frac{\tau}{\lambda_{\min}(\widetilde{H}_{kk})}\right)\zeta_k^{2}
  + R(\Delta^{\star}).
\]

The factor
$\tau/\lambda_{\min}(\widetilde H_{kk})$ measures the relative damping term
in this bound. When $H_{kk}\succeq0$, one has
$\lambda_{\min}(\widetilde H_{kk})\ge\tau$ and the factor is at most one.
For an indefinite block,
$\lambda_{\min}(\widetilde H_{kk})=\lambda_{\min}(H_{kk})+\tau$ may be
smaller than $\tau$; positive definiteness requires
$\tau>-\lambda_{\min}(H_{kk})$, and the ratio need not be at most one.

Small damping error alone does not preserve rankings when scores are nearly
tied. Define
\[
\varepsilon_k
:=
\frac{\tau}{2}\|d_k^\star\|_2^2
+
\frac{M}{6}\|d_k^\star\|_2^3.
\]
The ordering of layers $k$ and $\ell$ is certified whenever
\[
\frac12|\zeta_k^2-\zeta_\ell^2|>\varepsilon_k+\varepsilon_\ell.
\]
Neither batch normalization nor weight decay alone guarantees that an
empirical Hessian block is positive semidefinite.

This completes the within-task justification of the curvature score. We next
study a different source of error: replacing the target-task score vector by a
score vector estimated on a related source task.

\section{Transfer Stability Under Score Drift}
\label{sec:transfer}

The allocation and pruning programs depend on a normalized score
vector. We quantify the target-objective excess cost incurred when a source
score vector $q^{(A)}$ is substituted for the target score vector $q^{(B)}$,
while keeping the optimization objective, hyperparameters, and feasible set
fixed.

\subsection{Problem Setup}
Let $\mathcal{J}_B(x; q)$ denote the MDL program objective on task $B$,
as a function of the decision variable $x$ and quality inputs $q$.
Here $x = e \in \mathbb{R}^K_{\ge 0}$ for the allocation program
(Section~\ref{sec:alloc}) and $x = \rho \in [0,\rho_{max}]^K$ for the pruning
program (Section~\ref{sec:mdl_programs}),
with
\[
\Delta_K:=\{q\in\mathbb R_+^K:\mathbf 1^\top q=1\},
\qquad q\in\Delta_K.
\]
The score vector has $K$ coordinates and affine dimension $K-1$; it is
distinct from the decision variable $x$.

Let $\mathcal{X}$ denote the corresponding feasible set
(budget constraint or sparsity constraint).
Since Theorems~\ref{thm:alloc} and~\ref{thm:prune} guarantee
a unique minimizer for each admissible $q$, define
\[
  \widehat{x}_B(q)
  \;:=\;
  \arg\min_{x \in \mathcal{X}}\,\mathcal{J}_B(x;\, q).
\]
The \emph{transfer regret} of deploying the source-derived decision
$\widehat{x}_B(q^{(A)})$ on the target task is
\begin{equation}
  \label{eq:regret}
  \mathrm{Regret}_{A \to B}
  \;:=\;
  \mathcal{J}_B\!\left(\widehat{x}_B(q^{(A)});\, q^{(B)}\right)\\
  -
  \mathcal{J}_B\!\left(\widehat{x}_B(q^{(B)});\, q^{(B)}\right).
\end{equation}

Both terms are evaluated under the target objective and target scores
$q^{(B)}$, so Eq.~\eqref{eq:regret} is the excess cost of a misspecified
score input. The notation $\widehat x_B(q^{(A)})$ keeps
the target objective and feasible set fixed. Using
$\widehat x_A(q^{(A)})$ would additionally introduce objective,
hyperparameter, or feasible-set drift and would require separate error terms.

\subsection{Regularity Conditions}
Let $\Delta_K(q_{\min})=
\{q\in\Delta_K:q_k\ge q_{\min}\text{ for all }k\}$ for some
$q_{\min}>0$. We assume:
\begin{enumerate}[label=(\roman*)]
  \item $\mathcal{J}_B(\cdot;q)$ is $\sigma$-strongly convex on the common
  closed convex feasible set $\mathcal X$ for every
  $q\in\Delta_K(q_{\min})$;
  \item the decision gradient is Lipschitz in the score vector:
  \[
  \|\nabla_x\mathcal J_B(x;q)-\nabla_x\mathcal J_B(x;q')\|_2
  \le L_{xq}\|q-q'\|_2
  \]
  for all $x\in\mathcal X$ and
  $q,q'\in\Delta_K(q_{\min})$.
\end{enumerate}
For allocation with $\phi(e)=\log(1+e)$, one may take
\[
L_{xq}^{\mathrm{alloc}}
\le \gamma\beta\max_{t\in[q_{\min},1]}t^{\beta-1}.
\]
For pruning with $\psi(\rho)=\rho^2$,
\[
L_{xq}^{\mathrm{prune}}
\le 2\eta\kappa\rho_{\max}
\max_{t\in[q_{\min},1]}t^{\kappa-1}.
\]
The positive lower bound on $q_k$ is necessary for uniform constants when
$0<\beta<1$ or $0<\kappa<1$, and it also prevents the strong-convexity
modulus from vanishing in the pruning program.

\subsection{Transfer-Regret Bound}
With the score domain and regularity conditions fixed, we can state the
target-objective excess-cost guarantee.

\begin{theorem}[Transfer regret under score drift]
\label{thm:transfer}
Under assumptions~(i)--(ii),
\begin{equation}
\label{eq:regret_bound}
0\le \mathrm{Regret}_{A\to B}
\le
\frac{L_{xq}^2}{2\sigma}
\|q^{(A)}-q^{(B)}\|_2^2.
\end{equation}
If $|q_k^{(A)}-q_k^{(B)}|\le\delta_k$ for every $k$, then
\[
\mathrm{Regret}_{A\to B}
\le
\frac{L_{xq}^2}{2\sigma}\sum_{k=1}^K\delta_k^2.
\]
\end{theorem}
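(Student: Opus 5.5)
The plan is a three-point argument using first-order optimality plus strong convexity; no perturbation bound on the minimizer itself is needed. Write $x_A:=\widehat x_B(q^{(A)})$ and $x_B:=\widehat x_B(q^{(B)})$. Both are well defined and unique by Theorems~\ref{thm:alloc} and~\ref{thm:prune}, given that $q^{(A)},q^{(B)}\in\Delta_K(q_{\min})$. Set $d:=x_B-x_A$ and $\delta:=\|q^{(A)}-q^{(B)}\|_2$.

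The lower bound $\mathrm{Regret}_{A\to B}\ge 0$ is immediate. Indeed, $x_A\in\mathcal X$ is feasible and $x_B$ minimizes $\mathcal J_B(\cdot;q^{(B)})$ over $\mathcal X$.

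For the upper bound I combine two inequalities. The first is the variational inequality for the source-score problem, valid since $\mathcal J_B(\cdot;q^{(A)})$ is convex and differentiable on the closed convex set $\mathcal X$ and $x_A$ is its minimizer:
\[
\langle \nabla_x\mathcal J_B(x_A;q^{(A)}),\, d\rangle \ge 0 .
\]
The second is assumption~(i) applied to the target objective at the base point $x_A$:
\[
\mathcal J_B(x_B;q^{(B)}) \ge \mathcal J_B(x_A;q^{(B)}) + \langle \nabla_x\mathcal J_B(x_A;q^{(B)}),\, d\rangle + \tfrac{\sigma}{2}\|d\|_2^2 .
\]
Rearranging the second inequality and adding the first gives
\[
\mathrm{Regret}_{A\to B} \le \langle \nabla_x\mathcal J_B(x_A;q^{(A)})-\nabla_x\mathcal J_B(x_A;q^{(B)}),\, d\rangle - \tfrac{\sigma}{2}\|d\|_2^2 .
\]
Cauchy--Schwarz and assumption~(ii) bound the inner product by $L_{xq}\,\delta\,\|d\|_2$. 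The right-hand side is then at most $\max_{t\ge0}\bigl(L_{xq}\delta t-\tfrac{\sigma}{2}t^2\bigr)=L_{xq}^2\delta^2/(2\sigma)$, which is \eqref{eq:regret_bound}.

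The coordinatewise statement follows from $\|q^{(A)}-q^{(B)}\|_2^2=\sum_k (q_k^{(A)}-q_k^{(B)})^2\le\sum_k\delta_k^2$.

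The main obstacle is justifying the first-order steps at boundary points of $\mathcal X$. Examples are $e_k=0$ in the allocation program, and the box faces or budget hyperplane in the pruning program. The gradient must exist there, and the variational inequality must hold without invoking interior stationarity. I will handle this by noting that both objectives are smooth on an open neighborhood of $\mathcal X$. For allocation, $\log(1+e)$ is smooth for $e>-1$; for pruning, the objective is a polynomial. Hence the standard optimality condition for smooth convex minimization over a closed convex set applies directly, and the KKT multipliers from the earlier theorems are not needed.

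I will also check that assumption~(ii) is used only at the single point $x_A\in\mathcal X$, so a Lipschitz bound holding on $\mathcal X$ suffices. Likewise, I will confirm that strong convexity is used only for the target score $q^{(B)}$, which lies in $\Delta_K(q_{\min})$.
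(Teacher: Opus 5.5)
Your proposal is correct and follows essentially the same argument as the paper. Both use strong convexity of the target objective at the base point $x_A$, the variational inequality for the source-score minimizer, and assumption~(ii) with Cauchy--Schwarz, then maximize $L_{xq}\delta t-\tfrac{\sigma}{2}t^2$ over $t\ge0$. Your remark that both objectives are smooth on an open neighborhood of $\mathcal X$ also makes explicit the boundary-point justification that the paper leaves implicit.
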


\begin{proof}
Write $x_A=\widehat x_B(q^{(A)})$ and
$x_B=\widehat x_B(q^{(B)})$. Strong convexity of
$\mathcal J_B(\cdot;q^{(B)})$ gives
\begin{equation*}
\mathcal J_B(x_A;q^{(B)})-\mathcal J_B(x_B;q^{(B)})
\le
\left\langle\nabla_x\mathcal J_B(x_A;q^{(B)}),x_A-x_B\right\rangle
-\frac{\sigma}{2}\|x_A-x_B\|_2^2.
\end{equation*}
Because $x_A$ minimizes $\mathcal J_B(\cdot;q^{(A)})$ over the closed
convex set $\mathcal X$, its variational inequality is
\[
\left\langle\nabla_x\mathcal J_B(x_A;q^{(A)}),x_B-x_A\right\rangle\ge0.
\]
Equivalently,
$\langle\nabla_x\mathcal J_B(x_A;q^{(A)}),x_A-x_B\rangle\le0$.
Subtracting this nonpositive term and applying assumption~(ii) yields
\begin{equation*}
\mathrm{Regret}_{A\to B}
\le
L_{xq}\|q^{(A)}-q^{(B)}\|_2\|x_A-x_B\|_2
-\frac{\sigma}{2}\|x_A-x_B\|_2^2.
\end{equation*}
The right-hand side is maximized over
$r=\|x_A-x_B\|_2\ge0$ at
$r=L_{xq}\|q^{(A)}-q^{(B)}\|_2/\sigma$, which gives
Eq.~\eqref{eq:regret_bound}. Nonnegativity follows from the optimality of
$x_B$ under the target score vector. The coordinate-wise statement follows
from $\|q^{(A)}-q^{(B)}\|_2^2\le\sum_k\delta_k^2$.
\end{proof}

\subsection{Interpretation and Boundary Validity}
The bound is valid for interior and boundary solutions,
including active budget, sparsity, and box constraints; no multiplier-sensitivity
argument or additional smoothness constant is required. The factor
$L_{xq}^2/(2\sigma)$ separates sensitivity to score perturbations from the
strong-convexity conditioning of the target surrogate program.

\subsection{Cross-Task Transfer Diagnostic}
We conduct a cross-task diagnostic on Gemma-7B by deriving
the expert allocation from CommonsenseQA and deploying that allocation on the
remaining tasks. Because the experimental weights are LayerIF-derived proxies,
we report the squared proxy-score drift
\[
\|\widehat q^{(\mathrm{source})}-\widehat q^{(\mathrm{target})}\|_2^2,
\]
with each proxy vector normalized to sum to one. The accompanying performance
quantity is a downstream accuracy difference, not the nonnegative optimization
regret in Eq.~\eqref{eq:regret}.
    
    \begin{table}[tbh]
    \caption{Cross-task proxy-score drift and downstream accuracy difference for a CommonsenseQA-derived allocation.}
    % \begin{minipage}{0.38\textwidth}    
    \centering
    % \begin{tabular}{llc}
    \begin{tabular}{l >{\raggedright\arraybackslash}p{1.7cm} >{\centering\arraybackslash}p{1.7cm}}
    \toprule
    \textbf{Dataset} & \textbf{Proxy-score drift} & \textbf{Accuracy difference} \\ \hline
    CommonQA (Source) & 0.0000 &  0.00 \\ \midrule
    MRPC              & 0.2447 & -0.34 \\ \midrule
    CoLA              & 0.6273 &  1.82 \\ \midrule
    OpenBookQA        & 0.3127 &  1.20 \\ \midrule
    ScienceQA         & 0.6698 &  1.44 \\ 
    \bottomrule
    \end{tabular}    
    \label{tab:drift_score}
    \end{table}
    \vspace{1em}
The negative MRPC entry confirms that the third column is not
the theorem's optimization regret, which is nonnegative by definition. The table
therefore provides a qualitative cross-task diagnostic. The direct numerical
quantity associated with Theorem~\ref{thm:transfer} is
\[
R_J=
\mathcal J_B(\widehat x_B(q^{(A)});q^{(B)})
-
\mathcal J_B(\widehat x_B(q^{(B)});q^{(B)})\ge0,
\]
with upper bound
$L_{xq}^2\|q^{(A)}-q^{(B)}\|_2^2/(2\sigma)$.

The theorem controls the surrogate optimization objective, whereas the
cross-task experiment additionally reflects training noise, integer expert
counts, and downstream evaluation. We therefore keep the two quantities
separate and now turn to the implementation details and complete empirical
results.

\section{Experimental Details and Full Results}
\label{appendix:experiments}
\subsection{Hardware}
\label{appendix:hardware}
Mistral-7B experiments are run on 4$\times$ NVIDIA L40S GPUs;
Gemma-7B experiments are run on 4$\times$ NVIDIA A6000 GPUs.

\subsection{Hyperparameters}
\label{appendix:hyperparameter}
Table~\ref{tab:hyperparams} summarizes all hyperparameter settings.
For the pruning program (Algorithm~\ref{alg:prune}):
$b = 16$ bits per retained parameter, $\eta = 2$, $\kappa = 1$.
For the allocation program (Algorithm~\ref{alg:alloc}):
$\alpha = 0.5$, $\gamma = 0.9$, $\beta = 1$.
Layer sizes $n_k$ are set to the number of parameters in layer $k$,
and per-unit costs $c_k$ are set to the FLOPs of a single LoRA
expert at layer $k$.

\begin{table}[h]
\centering
\caption{Hyperparameter configurations for allocation and pruning.}
\label{tab:hyperparams}
\begin{tabular}{llcc}
\toprule
\textbf{Program} & \textbf{Parameter} & \textbf{Mistral-7B} & \textbf{Gemma-7B} \\
\midrule
\multirow{3}{*}{Allocation}
  & $\alpha$   & 0.5    & 0.5  \\
  & $\gamma$   & 0.9    & 0.9  \\
  & budget scaling, $\sigma$ & 0.0276 & 0.02 \\
\midrule
\multirow{3}{*}{Pruning}
  & $b$ (bits) & 16     & 16   \\
  & $\eta$     & 2      & 2    \\
  & $\kappa$   & 1      & 1    \\
  & Sparsity target, $S$ & 50\% & 50\% \\
\bottomrule
\end{tabular}
\end{table}

\subsection{Complete Expert-Allocation Results}
Tables~\ref{tab:mistral_expert_alloc} and~\ref{tab:exp_gemma} provide the
dataset-level results summarized in Section~\ref{subsec:results} of the main
paper. These complete tables separate the effect of the convex decision rule
from the influence-style score estimator shared with LayerIF.

\begin{table}[tbh]
\centering
\caption{Expert allocation accuracy (\%) on Mistral-7B-v0.1 (5 epochs).
         Best average per category in \textbf{bold}.}
\label{tab:mistral_expert_alloc}
\begin{tabular}{lrr|rr}
\toprule
& \multicolumn{2}{c|}{\textbf{All}} & \multicolumn{2}{c}{\textbf{+ve}} \\
\textbf{Dataset} & \textbf{LayerIF} & \textbf{MDL} & \textbf{LayerIF} & \textbf{MDL} \\
\midrule
CoLA        & 85.62 & 87.44 & 86.10 & 85.90 \\
MRPC        & 83.59 & 82.72 & 82.84 & 84.23 \\
CommonsenseQA & 81.24 & 80.43 & 81.40 & 80.18 \\
OpenBookQA  & 85.20 & 85.00 & 85.80 & 86.40 \\
ScienceQA   & 66.41 & 79.77 & 80.80 & 83.59 \\
\midrule
Average     & 80.41 & \textbf{83.07} & 83.39 & \textbf{84.06} \\
\bottomrule
\end{tabular}
\end{table}

\begin{table}[tbh]
\centering
\caption{Expert allocation accuracy (\%) on Gemma-7B, +ve variant only (5 epochs).
         Best average in \textbf{bold}.}
\label{tab:exp_gemma}
\begin{tabular}{lrr}
\toprule
\textbf{Dataset} & \textbf{LayerIF} & \textbf{MDL (+ve)} \\
\midrule
CoLA          & 87.15 & 87.34 \\
MRPC          & 84.41 & 84.99 \\
CommonsenseQA & 82.64 & 81.57 \\
OpenBookQA    & 88.40 & 88.80 \\
ScienceQA     & 94.69 & 94.92 \\
\midrule
Average       & 87.46 & \textbf{87.52} \\
\bottomrule
\end{tabular}
\end{table}

The full results establish the primary comparison. We next examine which parts
of the framework drive those outcomes by varying the allocation and pruning
hyperparameters and by comparing alternative score-to-decision pipelines.

\section{Ablations and Proxy Validation}
This section complements the main experiments in three steps. We first compare
the proposed allocation rule with uniform, hand-designed, and heuristic
alternatives. We then study sensitivity to the exponents $\beta$ and $\kappa$.
Finally, we compare direct evaluations of $\zeta_k^2$ with scalable proxy
scores on a smaller network where the regularized curvature quantity can be
computed explicitly.

\subsection{Mistral-7B Expert-Allocation Comparison}
We compare uniform and non-uniform MoLA allocations
\citep{MolA}, AlphaLoRA \citep{alphalora}, a gradient-norm/Fisher score
combined with the LayerIF heuristic, the original LayerIF decision rule
\citep{layerIF}, and our convex allocation program using the same scalable
influence-style score inputs. All methods are evaluated on Mistral-7B-v0.1
under the common five-epoch protocol.

\begin{table}[tbh]
  \centering
  \caption{Accuracy (\%) across expert-allocation strategies (Mistral-7B-v0.1, 5 epochs, All).}
  \label{tab:comparison}
  \begin{tabular}{lccccccc}
    \toprule
    Dataset & Uniform & \multicolumn{2}{c}{Non-uniform} & AlphaLora & Fisher & LayerIF & MDL \\
     & (MoLA 5555) & MoLA 2468 & MoLA 8642 & & (grad.\ norm) & & \\
    \midrule
    Cola         & 85.52 & 86.29 & 85.43 & 87.44 & 85.33 & 85.62 & 87.44 \\
    MRPC         & 83.94 & 82.61 & 84.23 & 83.54 & 80.87 & 83.59 & 82.72 \\
    CommonsenseQA        & 81.24 & 81.08 & 63.47 & 81.90 & 79.52 & 81.24 & 80.43 \\
    OpenbookQA       & 85.40 & 86.60 & 82.20 & 82.00 & 86.20 & 85.20 & 85.00 \\
    ScienceQA & 81.21 & 76.39 & 82.55 & 72.26 & 78.33 & 66.41 & 79.77 \\
    \bottomrule
  \end{tabular}
\end{table}
Table~\ref{tab:comparison} shows that the proposed method remains
competitive with the strongest alternatives while providing an explicit
continuous objective and a globally enforced resource budget. The comparison
also clarifies that the theoretical contribution concerns the score-to-decision
rule rather than a uniformly superior score estimator.

\subsection{Sensitivity to $\beta$}
We next isolate the gain-emphasis exponent $\beta$ in
Algorithm~\ref{alg:alloc}. Using Gemma-7B, we vary $\beta$ while holding the
remaining allocation hyperparameters and training protocol fixed. This
experiment measures how strongly concentrating capacity on the largest scores
affects dataset-level and average accuracy.
\begin{figure}[tbh]
    
    \begin{minipage}{0.58\textwidth}
        \centering
        \subcaptionbox{}{\includegraphics[width=\columnwidth]{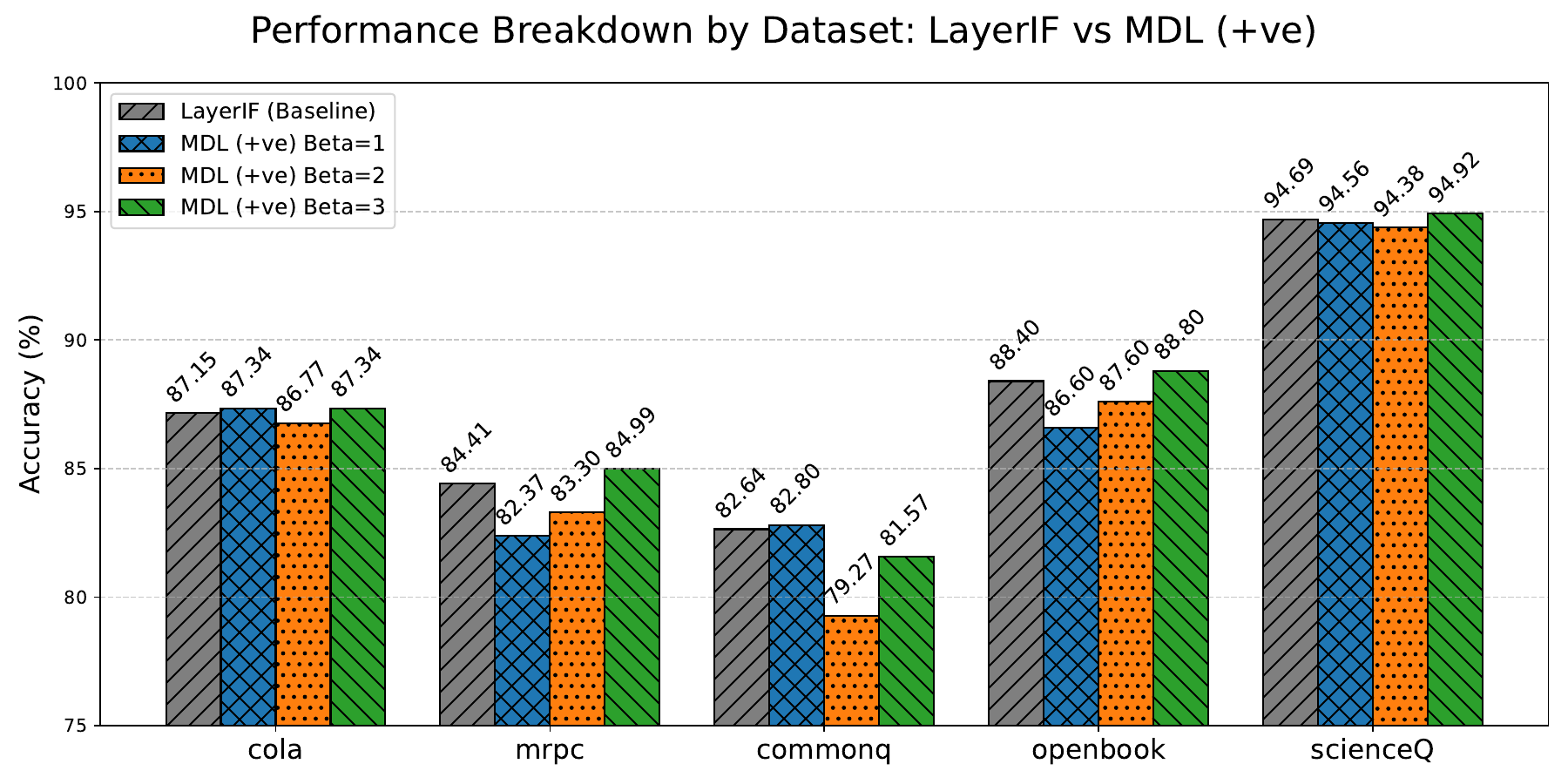}}
        \label{fig:beta_study}
    \end{minipage} 
    \hfill
    \begin{minipage}{0.48\textwidth}
        \subcaptionbox{}{\includegraphics[width=\columnwidth]{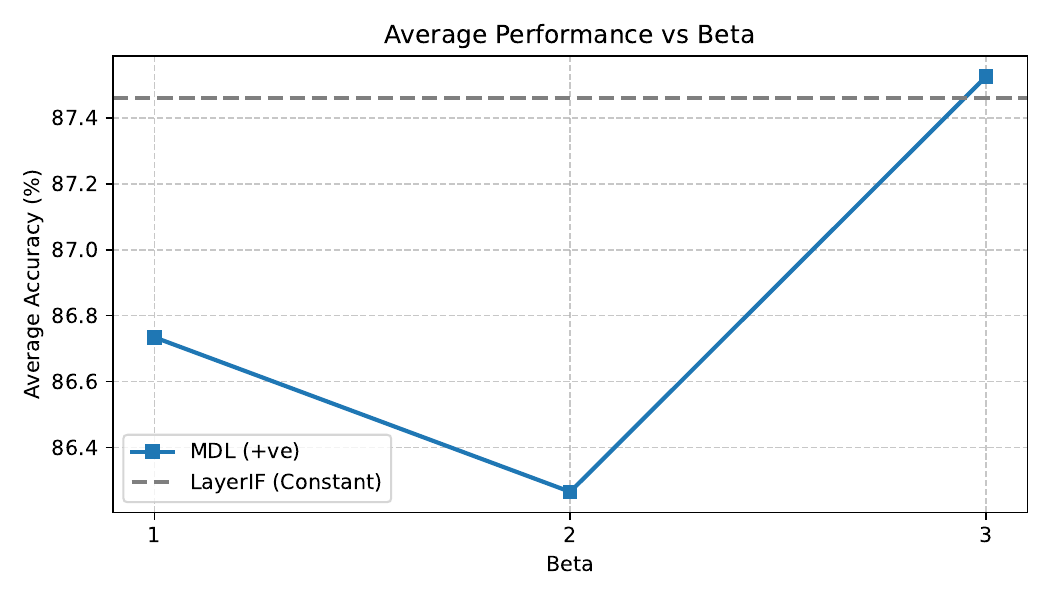}}
        \label{fig:beta_study_avg}
    \end{minipage}
    \caption{(a) reveals the detailed accuracy per dataset as Beta goes from $1-3$. (b) As shown for Gemma-7B, $\beta=3$ gives, on average, a better accuracy across the datasets.}
\end{figure}

\subsection{Sensitivity to $\kappa$}
The pruning exponent $\kappa$ controls how sharply the degradation penalty
distinguishes high- and low-score layers. We vary $\kappa$ on Gemma-7B while
holding the remaining pruning hyperparameters fixed, and report zero-shot
accuracy for Magnitude, Wanda, and SparseGPT pruning over the tasks listed in
Section~\ref{sec:experiment} of the main paper.

\begin{figure}[!tbh]
    \centering
    \subcaptionbox{}{\includegraphics[width=\linewidth]{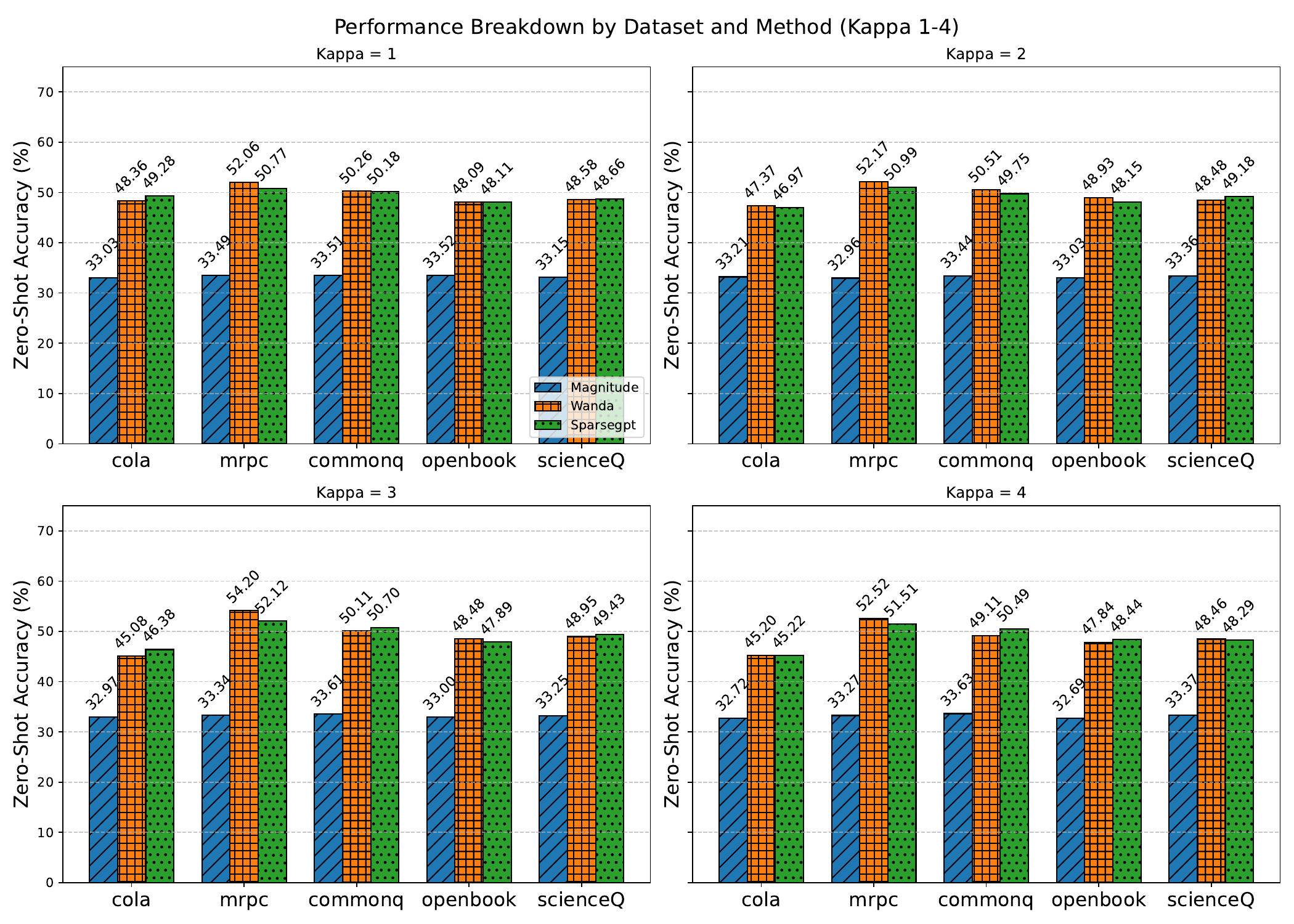}}

    \centering
    \subcaptionbox{}{\includegraphics[width=0.6\linewidth]{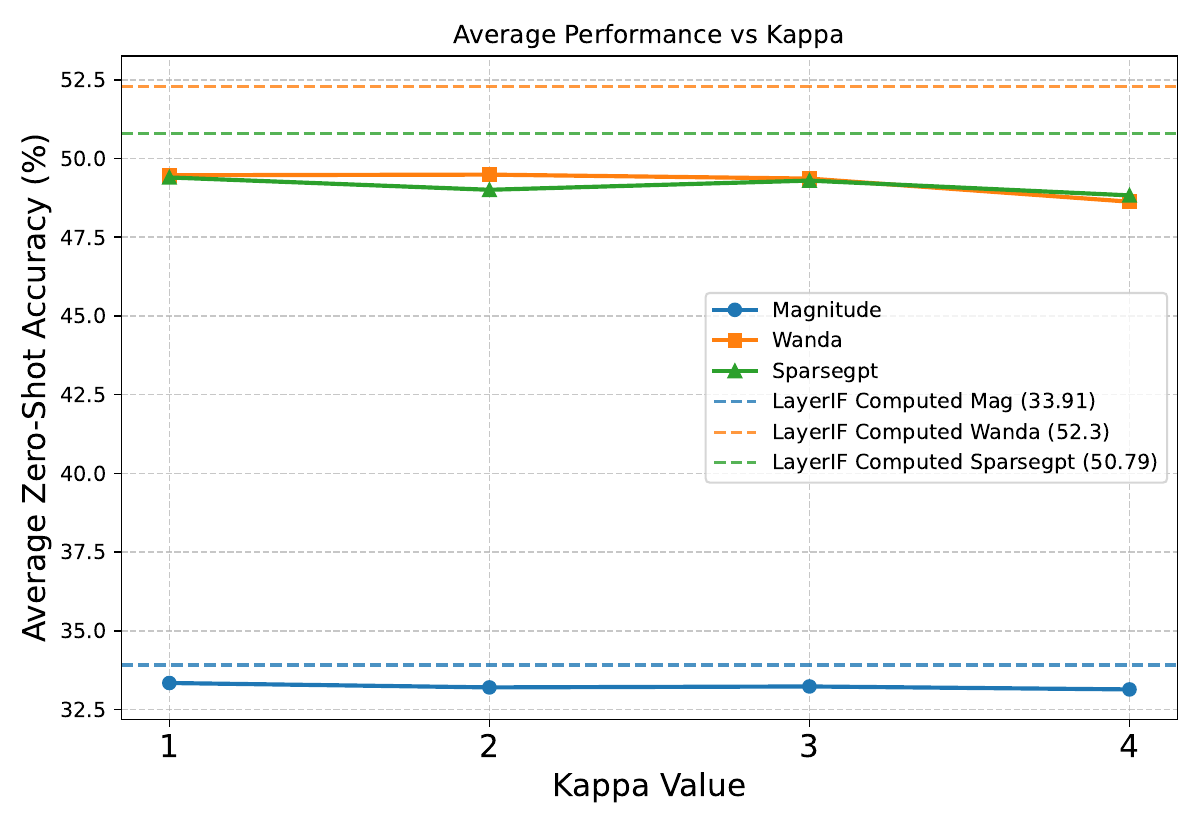}}
    \caption{(a) Performance breakdown for the three pruning types across varying values of $\kappa$ (b) Across the pruning types, an increase in $\kappa$ doesn't seem to be helpful when compared to the LayerIF baselines.   }
    \label{fig:kappa_studies}
\end{figure}

\subsection{Direct and Proxy Estimates of $\zeta^2$ on an 8-Layer MLP}
The preceding ablations vary the optimization programs while keeping the score
pipeline fixed. We now examine that pipeline directly on an 8-layer
multilayer perceptron, where
$\zeta_k^2(\tau)=g_k^\top\widetilde H_{kk}^{-1}g_k$ can be computed and
compared with scalable alternatives. The damping parameter $\tau$ is central
to this comparison: as $\tau I$ increasingly dominates $H_{kk}$ in
Eq.~\eqref{eq:surrogate},
\[
\widetilde H_{kk}^{-1}\approx \tau^{-1}I,
\qquad
\zeta_k^2(\tau)\approx \tau^{-1}\|g_k\|_2^2,
\]
so the regularized second-order score approaches a rescaled gradient-norm
ranking. Tables~\ref{tab:tau_0.001}--\ref{tab:tau_2} show how the proxy
correlations evolve across this regime.

\begin{table}[htbp]
    \centering
    
    \begin{minipage}{0.48\textwidth}
        \centering
        \caption{Tau / damping = 0.001}
        \label{tab:tau_0.001}
        \begin{tabular}{lcc}
        \toprule
        Proxy/Baseline & Spearman & Kendall \\
        \midrule
        LayerIF\_style & $0.659 \pm 0.224$ & $0.571 \pm 0.210$ \\
        diag\_fisher & $0.484 \pm 0.204$ & $0.381 \pm 0.187$ \\
        grad\_norm & $0.611 \pm 0.157$ & $0.452 \pm 0.135$ \\
        \bottomrule
        \end{tabular}
    \end{minipage}\hfill
    \begin{minipage}{0.48\textwidth}
        \centering
        \caption{Tau / damping = 0.01}
        \label{tab:tau_0.01}
        \begin{tabular}{lcc}
        \toprule
        Proxy/Baseline & Spearman & Kendall \\
        \midrule
        LayerIF\_style & $0.754 \pm 0.185$ & $0.643 \pm 0.210$ \\
        diag\_fisher & $0.349 \pm 0.238$ & $0.310 \pm 0.187$ \\
        grad\_norm & $0.706 \pm 0.137$ & $0.571 \pm 0.117$ \\
        \bottomrule
        \end{tabular}
    \end{minipage}
    
    \vspace{1.5em}%
    
    \begin{minipage}{0.48\textwidth}
        \centering
        \caption{Tau / damping = 0.1}
        \label{tab:tau_0.1}
        \begin{tabular}{lcc}
        \toprule
        Proxy/Baseline & Spearman & Kendall \\
        \midrule
        LayerIF\_style & $0.881 \pm 0.168$ & $0.833 \pm 0.236$ \\
        diag\_fisher & $0.214 \pm 0.418$ & $0.167 \pm 0.379$ \\
        grad\_norm & $0.690 \pm 0.085$ & $0.571 \pm 0.058$ \\
        \bottomrule
        \end{tabular}
    \end{minipage}\hfill
    \begin{minipage}{0.48\textwidth}
        \centering
        \caption{Tau / damping = 1}
        \label{tab:tau_1}
        \begin{tabular}{lcc}
        \toprule
        Proxy/Baseline & Spearman & Kendall \\
        \midrule
        LayerIF\_style & $0.738 \pm 0.178$ & $0.619 \pm 0.221$ \\
        diag\_fisher & $0.381 \pm 0.152$ & $0.333 \pm 0.121$ \\
        grad\_norm & $0.825 \pm 0.119$ & $0.738 \pm 0.121$ \\
        \bottomrule
        \end{tabular}
    \end{minipage}
    
    \vspace{1.5em}%
    
    \begin{minipage}{0.48\textwidth}
        \centering
        \caption{Tau / damping = 2}
        \label{tab:tau_2}
        \begin{tabular}{lcc}
        \toprule
        Proxy/Baseline & Spearman & Kendall \\
        \midrule
        LayerIF\_style & $0.595 \pm 0.185$ & $0.476 \pm 0.168$ \\
        diag\_fisher & $0.444 \pm 0.011$ & $0.381 \pm 0.034$ \\
        grad\_norm & $0.929 \pm 0.034$ & $0.833 \pm 0.067$ \\
        \bottomrule
        \end{tabular}
    \end{minipage}
    
\end{table}

The results reveal a damping-dependent transition. At small-to-moderate
damping, the LayerIF-style proxy is most strongly aligned with the direct
regularized score, whereas at larger damping the gradient norm becomes the
closest proxy, as predicted by
$\widetilde H_{kk}^{-1}\approx \tau^{-1}I$. This diagnostic does not establish
proxy equivalence for LLMs, but it clarifies when an influence-style proxy can
retain information beyond raw gradient magnitude.

Having examined both program hyperparameters and score proxies, we finally
assess whether the observed allocation differences are statistically
distinguishable under the current evaluation budget.

\section{Statistical Significance and Interpretation}
To assess the reliability of the allocation comparison, we performed paired
$t$-tests between the proposed MDL-inspired decision rule and LayerIF on the
matched model--dataset evaluations reported below. Because the two methods use
the same influence-style layer-score inputs, this analysis isolates the
difference between their allocation rules rather than the quality of the
underlying sensitivity estimator.

\begin{table}[htbp]

        \caption{Paired comparisons between LayerIF and the proposed allocation rule. None of the reported differences is significant at the 0.05 level.}
    \begin{minipage}{0.48\textwidth}
        \centering
        \label{tab:mistral-p-value-posIF}
        \begin{tabular}{lcc}
        \toprule
        Algorithm (Mistral) & Average ($\%$) & p-value \\
        \midrule
        LayerIF (+ve) & $83.39$ & $0.1906$ \\
        MDL (+ve) & $84.06$ & $-$ \\
        \midrule
        Mean diff & $0.67$ & $-$\\
        \bottomrule
        \end{tabular}
    \end{minipage}\hfill
    \begin{minipage}{0.48\textwidth}
        \centering
        \label{tab:mistral-p-value-allIF}
        \begin{tabular}{lcc}
        \toprule
        Algorithm (Mistral) & Average ($\%$) & p-value \\
        \midrule
        LayerIF (All) & $80.41$ & $0.1917$ \\
        MDL (All) & $83.07$ & $-$ \\
        \midrule
        Mean diff & $2.66$ & $-$ \\
        \bottomrule
        \end{tabular}
    \end{minipage}

    \vspace{1.3em}

    {\centering
    \begin{minipage}{0.48\textwidth}
        \centering
        \begin{tabular}{lcc}
        \toprule
        Algorithm (Gemma) & Average ($\%$) & p-value \\
        \midrule
        LayerIF (All) & $87.46$ & $0.4162$ \\
        MDL (All) & $87.52$ & $-$ \\
        \midrule
        Mean diff & $0.07$ & $-$ \\
        \bottomrule
        \end{tabular}
    \end{minipage}\par}
\end{table}

The differences are not statistically significant at the 0.05 level, so
the experiments support comparability rather than a universal accuracy
improvement. This outcome should be interpreted in the context of the design:
our method and LayerIF intentionally share the same influence-style layer
scores, and the comparison therefore isolates the decision rule. Under the
current evaluation budget, replacing the LayerIF knapsack heuristic with the
convex MDL-inspired program preserves competitive empirical performance while
adding an explicit risk--complexity objective, global budget feasibility, and
the transfer-stability guarantee proved above.

% \newpage
\section{Limitations} \label{sec:limit}
The quadratic degradation model $\psi(\rho)=\rho^2$ may
underestimate pruning sensitivity in architectures with highly
heterogeneous layer widths, as seen in the Gemma-7B Wanda results.
The Hessian surrogate $\widetilde{H}_{kk}$ requires a curvature
approximation (e.g., diagonal Fisher or K-FAC) whose quality affects
the accuracy of $\zeta_k^2$.
The experiments instantiate the programs with LayerIF-derived proxy
weights rather than directly computed Newton-decrement gains. They therefore
evaluate the convex decision rule, but do not directly establish the empirical quality of the proposed curvature estimator.

\section{Future work} \label{sec:future}
Natural extensions include: (i) richer degradation penalties
$\psi$ calibrated to specific pruning methods;
(ii) joint optimization of allocation and pruning in a single
program; and (iii) online updating of $\zeta_k^2$ during fine-tuning
to track curvature drift adaptively.

This concludes the theoretical and empirical supplement. We finish with the
required disclosure concerning the use of language-model tools during
preparation.

\section{LLM Use Disclosure}
We employed Large Language Models (LLMs) to refine the text for grammar and clarity. Additionally, LLMs were used for debugging the published codebases of \citet{layerIF}, \citet{datainf} and in writing  visualization code. We confirm that LLMs were not used to implement any of the core algorithms or methodologies proposed in this work.

\end{document}